\def\bra#1{\langle{#1}\vert}
\def\ket#1{\vert{#1}\rangle}
\def\braket#1{\langle{#1}\rangle}
\def\BraVert{\egroup\,\mid\,\bgroup}
\def\tr#1{\mbox{tr}\left[{#1}\right]}
\newcommand{\ent}{S}
\newcommand{\q}{Q}
\newcommand{\s}{s}
\newcommand{\e}{r}
\newcommand{\se}{{sr}}
\definecolor{Blue}{rgb}{0,0,1}
\definecolor{Red}{rgb}{0,0,0}
\definecolor{Green}{rgb}{0,1,0}
\definecolor{Purp}{rgb}{.2,0,.2}
\definecolor{white}{rgb}{1,1,1}
\theoremstyle{plain}
\newtheorem{thm}{\protect\theoremname}
\theoremstyle{plain}
\newtheorem{lem}[thm]{\protect\lemmaname}
\theoremstyle{plain}
\theoremstyle{remark}
\newtheorem*{rem*}{\protect\remarkname}
\theoremstyle{plain}
\theoremstyle{plain}
\theoremstyle{definition}
\providecommand{\propositionname}{Proposition}
\providecommand{\theoremname}{Theorem}
\providecommand{\lemmaname}{Lemma}
\providecommand{\remarkname}{Remark}
\providecommand{\conjecturename}{Conjecture}
\providecommand{\definitionname}{Definition}
\providecommand{\corollaryname}{Corollary}
\newcommand{\dif}{\textnormal{d}}
\newcommand{\expo}{\text{e}}
\newcommand{\kb}[2]{\ket{#1} \bra{#2}}
\newcommand{\ptr}[2]{\operatorname{tr}_{#1}
\left[ #2 \right]}
\newcommand{\normp}[2]{\left\| #2 \right\|_{#1}}
\DeclareMathOperator{\Prob}{Prob}
\begin{document}

\title{Emergence of a fluctuation relation for heat in nonequilibrium Landauer processes}

\author{Philip Taranto}
\email{philip.taranto@monash.edu}
\affiliation{School of Physics \& Astronomy, Monash University, Victoria 3800, Australia}

\author{Kavan Modi}
\affiliation{School of Physics \& Astronomy, Monash University, Victoria 3800, Australia}

\author{Felix A. Pollock}
\affiliation{School of Physics \& Astronomy, Monash University, Victoria 3800, Australia}

\date{\today}

\begin{abstract}
In a generalised framework for the Landauer erasure protocol, we study bounds on the heat dissipated in typical nonequilibrium quantum processes. In contrast to thermodynamic processes, quantum fluctuations are not suppressed in the nonequilibrium regime and cannot be ignored, making such processes difficult to understand and treat. Here, we derive an emergent fluctuation relation that virtually guarantees the average heat produced to be dissipated into the reservoir when either the system or reservoir is large (or both), or when the temperature is high. The implication of our result is that for nonequilibrium processes, heat fluctuations away from its average value are suppressed independently of the underlying dynamics exponentially quickly in the dimension of the larger subsystem and linearly in the inverse temperature. We achieve these results by generalising a concentration of measure relation for subsystem states to the case where the global state is mixed. 
\end{abstract}
\maketitle
\section{Introduction} 

Landauer's principle \textbf{(LP)} provides the clearest evidence that ``information is physical'' by relating logically-irreversible computations to a necessary energy expenditure~\cite{Landauer1961}. The principle lies at the interface between information theory and thermodynamics: simultaneously offering deep consequences for the foundations of physics whilst positing the daunting technical challenge of managing heat dissipation in computers, whether they operate on classical or quantum logic. Indeed, although LP was initially postulated from classical thermodynamic considerations~\cite{Landauer1961,Bennett1982}, early research efforts aimed to either develop a microscopic, nonequilibrium version of the principle~\cite{Shizume1995, Piechocinska2000, Lutz2009, Sagawa2009} or extend it into the quantum domain~\cite{Plenio1999, Vedral2000, Janzig2000, Vaccaro2009, Hilt2011, Esposito2011, Barnett2013}. However, the microscopic versions often relied on specific models, and many quantum extensions assumed the principle to hold \emph{a priori}, before investigating its implications. Perhaps surprisingly, recent experiments demonstrate that LP applies to irreversible, nonequilibrium processes involving individual quantum systems~\cite{Berut2012, Orlov2012, Jun2014, Silva2014}. This evidence sparked a revival of interest in developing a rigorous formulation of LP in this regime, culminating in an equality form of LP derived by Reeb \& Wolf \textbf{(RW)} within a minimal framework~\cite{Reeb2014}.  

Despite the substantial body of work surrounding LP, little is known about the tightness of the bound at microscopic scales, nor about how Landauer heat can be tamed, \emph{i.e.}, how to minimise the heat required to process quantum information~\cite{Mohammady2016}. In the nonequilibrium setting of logical processes, where highly entangling operations are often utilised, the typical behaviour of the heat generated is not immediate, even on average -- thermodynamic intuition breaks down due to the strong interaction. On a technical level, minimising this heat is crucial for our ability to manipulate quantum systems to outperform their classical counterparts, as the quantum advantage often relies on coherent control that suffers from heat fluctuations. An approach to resolving these outstanding issues makes use of tools from nonequilibrium statistical physics~\cite{Esposito2009, Brandao2013, Tasaki2016, Goold2015, Lorenzo2015,Faist2015,Croucher2017}. In particular, RW~\cite{Reeb2014} and Goold, Paternostro \& Modi \textbf{(GPM)}~\cite{Goold2015} independently derived tighter bounds on heat (in the quantum setting) than that of Landauer. Although the RW correction only depends on the dimension of the reservoir and there is no strict hierarchy between the two novel bounds, the GPM bound tends to outperform the former in various regimes of interest; however, its calculation depends explicitly on details of the process and is therefore difficult to estimate in general~\cite{Goold2015}.

On a more fundamental level, LP is often cited as an equivalent formulation of the Second Law of Thermodynamics -- indeed it is when considering processes taking place near equilibrium. Investigations into statistical formulations of the Second Law which hold in the nonequilibrium regime have lead to the much lauded fluctuation relations~\cite{Jarzynski1996,Jarzynski1997,Crooks1998,Crooks1999}, which bound process-dependent quantities (such as work) to thermodynamic ones (such as the free energy). It is natural to ask whether one can develop similar process-independent bounds on heat by studying LP in the nonequilibrium regime. Such a formulation is indeed pertinent to the field of quantum thermodynamics, where processes inherently take place far from equilibrium due to the mesoscopic nature of the relevant subsystems~\cite{Goold2015Review, Vinjanampathy2015}. 

The aim of this Article is therefore to understand what universal properties typically emerge with respect to the heat generated in open quantum processes. We prove the emergence of a fluctuation relation for the heat dissipated in a Landauer process, stating that on average, heat is almost always dissipated into the environment. We analytically prove that this fluctuation relation arises exponentially quickly as the dimensions of either subsystem grows, and linearly in the inverse temperature. Our result extends the minimal framework for describing Landauer processes~\cite{Reeb2014} and is derived by examining fluctuations of the heat distribution~\cite{Goold2015}. We begin the Article by introducing the former and constructing the latter.

\section{Background} 

Surprisingly, until recently there was no consensus on how LP should be quantitatively expressed. This changed when RW formally derived a bound for the dissipated heat under a minimal set of assumptions~\cite{Reeb2014}: (i) the irreversible process involves a system $\s$ and a reservoir $\e$; (ii) the initial joint state is uncorrelated: $\rho_{\se} = \rho_\s \otimes \rho_\e$; (iii) the reservoir is initially in a thermal (Gibbs) state $\rho_\e := \expo^{- \beta H_\e}/Z$, where $\beta$ is the inverse temperature, $H_\e$ is the reservoir Hamiltonian, and $Z := \tr{\expo^{- \beta H_\e} }$ is the partition function; and (iv) the joint state evolves unitarily: $\rho_{\se}^\prime = U \rho_{\se} U^\dagger$. We call such processes \emph{Landauer processes}.

RW show that relaxing any one of the assumptions above can lead to violation of the bound
\begin{gather}\label{rwbound}
\beta \braket{\q} \ge \Delta \ent + R(\Delta \ent,d_\e) := \omega,
\end{gather}
where the inequality without $R(\Delta \ent,d_\e)$ is Landauer's bound. Here, the average heat dissipated into the reservoir is $\braket{\q} := \tr{H_\e (\rho_\e^\prime - \rho_\e)}$; the change in von Neumann entropy of the system is $\Delta S := \ent(\rho_\s) - \ent(\rho_\s^{\prime})$ with $\ent(\rho) := -\tr{\rho \log(\rho)}$; and  $R(\Delta \ent,d_\e) \ge 0$ is a correction term that tightens the Landauer bound for finite-sized reservoirs~\footnote{This Article only considers finite-dimensional systems.}.

Crucially, the above framework accounts for processes that will be employed by realistic quantum technologies: namely, nonequilibrium processes that lie outside the realm of traditional thermodynamics and are difficult to treat because of heat fluctuations that are not suppressed. In other words, the heat generated in a single run of the process can vary drastically from its average behaviour. The modern approach to describing such nonequilibrium processes employs fluctuation relations~\cite{Jarzynski1996, Jarzynski1997, Crooks1998, Crooks1999, Tasaki2000, Talkner2007, Talkner2009, Saira2012, Jarzynski2011, Campisi2011,Rastegin2013, Albash2013, Rastegin2014, Goold2014, Goold2015Review, Vinjanampathy2015, Deffner2016}. These relate thermodynamic quantities (\emph{e.g.}, free energy difference) to nonequilibrium quantities (\emph{e.g.}, work or heat), offering a promising route to understanding the thermodynamics of small systems whose relevant dynamics may occur on shorter timescales than equilibration. Recent work demonstrates that this formalism is applicable for the experimental exploration of quantum thermodynamics~\cite{Dorner2013, Mazzola2013, Batalhao2014, Goold2014Heat}; importantly including measuring the heat distribution of a quantum Landauer process~\cite{Silva2014}.

Applying such tools to the Landauer protocol, GPM developed a novel bound for the average heat~\cite{Goold2015}. By taking projective measurements of the reservoir energy, the complete distribution of the heat exchanged can be constructed: $P(\q) := \sum_{mn} P(E_m|E_n)P(E_n) \, \delta ( \q - (E_n - E_m))$, where $P(E_n) = \braket{E_n | \rho_\e | E_n}$ and $P(E_m|E_n) = \sum_l |\braket{E_n | A_l  |E_m}|^2$ are the initial and (conditional) final measurement probabilities respectively. The $A_{l=jk} = \sqrt{\lambda_j} \braket{k|U|j}$ are Kraus operators describing the local action of the evolution on the reservoir, with $\{ \lambda_j , \ket{j} \}$ the eigenvalues and eigenstates of $\rho_\s$ respectively. From this distribution, GPM show that the average exponentiated heat can be written as
\begin{gather}\label{gpmbound}
\Gamma := \braket{\expo^{-\beta \q}} = \tr{U^\dag \mathbbm{1}_\s \otimes \rho_\e U \rho_s \otimes \mathbbm{1}_\e},
\end{gather}
where $\braket{\expo^{-\beta \q}} = \int \dif  \q \, P( \q) \, \expo^{- \beta  \q}$. Invoking Jensen's inequality~\footnote{Jensen's inequality holds for any convex function $f$ of a random variable $X$: \unexpanded{$ \braket{ f(X) } \ge f(\braket{X})$}.}, the GPM bound is immediately derived
\begin{gather}\label{gpmbound2}
\beta \braket{\q} \ge -\ln(\Gamma) =: \gamma.
\end{gather}
Although the derivation above is reminiscent of the Jarzynski equality~\cite{Jarzynski1996, Jarzynski1997}, Eq.~\eqref{gpmbound} is not a true fluctuation relation since $\Gamma$ depends explicitly on the details of the process, \textit{i.e.}, $\Gamma = \Gamma (U, \rho_\s)$. However, we now show that a fluctuation relation emerges quickly for typical Landauer processes, meaning deviations of $\Gamma$ from its average value become suppressed \emph{independently of process details}. 

\section{Main Results}

\textbf{Emergence of fluctuation relation.---}Our main result shows that, for Haar-randomly sampled joint-space unitary interactions, a fluctuation relation for heat arises in the limit where the dimension of the system, reservoir, or both, becomes large; or when the temperature is high, \emph{i.e.}, $\Gamma \to 1$. In fact, as the dimensions of either $\s$ or $\e$ grow, the deviations of $\Gamma$ from unity are at least exponentially suppressed; in the high temperature limit this suppression is at least linear. First, we demonstrate the exponential scaling with dimension through the following Theorem:
\begin{thm} \label{LargeLimitTheorem}
When either the system or the reservoir dimensions are much larger than the other, \textnormal{i.e.},  $d_\s \ll d_\e$ or $d_\s \gg d_\e$, the deviations of $\Gamma$ from unity are at least exponentially suppressed in the dimension of the larger subsystem.
\end{thm}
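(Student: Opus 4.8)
The plan is to recognise $\Gamma$ of Eq.~\eqref{gpmbound} as a functional of the reduced reservoir state after a Haar-random rotation, show that its average over the unitary group is identically $1$, and then invoke concentration of measure to pin $\Gamma$ to that value. Writing $A := \mathbbm{1}_\s \otimes \rho_\e$ and $B := \rho_\s \otimes \mathbbm{1}_\e$, Eq.~\eqref{gpmbound} reads $\Gamma = \tr{U^\dag A U B} = d_\e\,\tr{\rho_\e\,\omega_\e}$, where $\omega_\e := \ptr{\s}{U\,(\rho_\s \otimes \mathbbm{1}_\e/d_\e)\,U^\dag}$ is the reduced state on the reservoir obtained by unitarily scrambling the (generally \emph{mixed}) input $\rho_\s \otimes \mathbbm{1}_\e/d_\e$. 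The first step is a one-line first-moment calculation over the Haar measure: $\mathbb{E}_U[\omega_\e] = \mathbbm{1}_\e/d_\e$, and hence $\mathbb{E}_U[\Gamma] = 1$ for every $\rho_\s$ and $\rho_\e$. Proving the Theorem therefore reduces to showing that $\Gamma$ does not stray far from its mean as $d_{\max} := \max\{d_\s, d_\e\}$ grows.

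For this I would invoke Levy's lemma on $\mathrm{U}(d_\s d_\e)$: if $U \mapsto \Gamma(U)$ is $\eta$-Lipschitz in the Hilbert--Schmidt metric, then $\Prob[\,|\Gamma - 1| \ge \epsilon\,] \le 2\,\expo^{-c\, d_\s d_\e\, \epsilon^2/\eta^2}$ for an absolute constant $c$. Differentiating $\Gamma(\expo^{tX}U)$ along an anti-Hermitian direction $X$ gives $\partial_t \Gamma|_0 = \tr{[A,X]\,UBU^\dag} = -\tr{X\,[A,UBU^\dag]}$, so by the Cauchy--Schwarz and Hölder inequalities for Schatten norms $\eta \le 2\min\{\|\rho_\e\|_\infty\sqrt{d_\e\,\tr{\rho_\s^2}},\ \|\rho_\s\|_\infty\sqrt{d_\s\,\tr{\rho_\e^2}}\}$, using $\|A\|_2 = \sqrt{d_\s\,\tr{\rho_\e^2}}$, $\|B\|_2 = \sqrt{d_\e\,\tr{\rho_\s^2}}$ and the unitary invariance of $\|\cdot\|_2$. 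Since $\|\rho_\s\|_\infty$, $\|\rho_\e\|_\infty$, $\tr{\rho_\s^2}$, $\tr{\rho_\e^2}$ are all at most $1$, this gives $\eta \le 2\sqrt{\min\{d_\s,d_\e\}}$, and feeding it back into Levy's bound yields $\Prob[\,|\Gamma - 1| \ge \epsilon\,] \le 2\,\expo^{-c'\,\epsilon^2\,d_{\max}}$ — exactly the claimed exponential suppression in the dimension of the larger subsystem, with the regimes $d_\s \gg d_\e$ and $d_\s \ll d_\e$ selecting, respectively, the first and second entry of the $\min$.

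I expect the main obstacle to be obtaining this ``right'' Lipschitz constant $\eta = O(\sqrt{d_{\min}})$ rather than a naive one: the prefactor $d_\e$ in $\Gamma = d_\e\,\tr{\rho_\e\,\omega_\e}$ threatens to make $\eta$ scale with the \emph{large} dimension, which would ruin the estimate precisely in the interesting regime $d_\e \gg d_\s$. The cure is to be careful in the Hölder split, pairing the operator norm of whichever of $\rho_\s$, $\rho_\e$ sits on the larger factor with the Hilbert--Schmidt norm of the operator that (up to a tensored identity) is supported on the smaller factor, so that the surviving dimensional factor is always $\sqrt{d_{\min}}$; this partial-trace/norm bookkeeping is where the hypotheses $d_\s \ll d_\e$ and $d_\s \gg d_\e$ enter asymmetrically. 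As a cross-check one can compute the exact second moment via the Weingarten formula, obtaining $\mathbb{E}_U[\Gamma^2] - 1 = (d_\s\,\tr{\rho_\s^2} - 1)(d_\e\,\tr{\rho_\e^2} - 1)/(d_\s^2 d_\e^2 - 1) \le 1/(d_\s d_\e)$; this confirms $\mathbb{E}_U[\Gamma]=1$ with vanishing fluctuations, but Chebyshev alone only delivers polynomial tails, so the exponential statement genuinely rests on the concentration-of-measure step — equivalently, on the fact that the scrambled reduced state $\omega_\e$ concentrates on the maximally mixed state despite the global input $\rho_\s \otimes \mathbbm{1}_\e/d_\e$ being mixed.
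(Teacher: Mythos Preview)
Your proof is correct and, in fact, more direct than the paper's. Both arguments ultimately rest on Levy's lemma on $\mathrm{SU}(d_\s d_\e)$, but they apply it to different functions. The paper does not work with $\Gamma(U)$ directly: it first proves a standalone concentration result (Lemma~\ref{lem:levymixed}) for the trace distance $f(U)=\normp{1}{\sigma_\s(U)-\mathbbm{1}_\s/d_\s}$ of the reduced state generated from a \emph{mixed} global input, using Gromov's isoperimetric theorem to relate the concentration function on the unitary group to that on the sphere; only then does it bound $|\Gamma-1|\le d_\s\,\normp{1}{M_\s-\mathbbm{1}_\s/d_\s}$ via the variational characterisation of the trace norm and invoke the lemma. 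You instead compute $\mathbb{E}_U[\Gamma]=1$ exactly and apply Levy's lemma to $\Gamma(U)$ itself, controlling its Lipschitz constant by the H\"older split $\|[A,UBU^\dagger]\|_2\le 2\min\{\|A\|_\infty\|B\|_2,\|A\|_2\|B\|_\infty\}=2\sqrt{d_{\min}}$. This bypasses the intermediate reduced-state lemma entirely and yields the cleaner tail bound $\Prob[|\Gamma-1|\ge\epsilon]\le 2\exp(-c\,d_{\max}\,\epsilon^2)$, without the additive $\sqrt{d_\s/d_\e}$ offset or the extra $d_\s$ prefactor that the paper's route incurs. What the paper's detour buys is Lemma~\ref{lem:levymixed} as a result of independent interest (concentration of marginals along unitary orbits of mixed states); what your route buys is brevity and a symmetric statement that does not require separately treating $d_\s\ll d_\e$ and $d_\s\gg d_\e$. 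Your Weingarten second-moment cross-check is a nice sanity test and is absent from the paper.
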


Note first that we can write Eq.~\eqref{gpmbound} as $\Gamma = d_\s \tr{M_\s \rho_\s} = d_\e \tr{M_\e \rho_\e}$, where
\begin{gather}\label{opMs}
M_\s := \ptr{\e}{U^\dag \frac{\mathbbm{1}_\s}{d_\s} \otimes \rho_\e U},\quad
M_\e := \ptr{\s}{U \rho_\s \otimes \frac{\mathbbm{1}_\e}{d_\e} U^\dagger}.
\end{gather}
The following Lemma is a generalisation of standard concentration of measure results for quantum states (see, \textit{e.g.}, Ref.~\cite{Popescu2006}) to the case where the reduced density operators are generated from \emph{unitary orbits of mixed states}:

\begin{lem} \label{lem:levymixed}
For any $\sigma_\se = U \tau_\se U^\dagger$, where $\tau_\se$ is a fixed system-reservoir density operator and $U$ is a Haar-randomly sampled unitary operator
\begin{gather}  \label{LevyBound}
\Prob\left[ \normp{1}{\sigma_\s - \frac{\mathbbm{1}_\s}{d_\s}} \geq \sqrt{\frac{d_\s}{d_\e}} + \epsilon \right] \leq2 \exp{ \left( - \frac{d_\s d_\e \epsilon^2 }{16} \right)}.
\end{gather}
The same holds with system and reservoir labels swapped.
\end{lem}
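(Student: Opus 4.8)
The plan is to reduce the mixed-state statement to the known pure-state concentration result by a purification-and-contraction argument. First I would write the fixed density operator $\tau_\se$ in its eigendecomposition $\tau_\se = \sum_k p_k \kb{\psi_k}{\psi_k}$ and purify it on a copy system of dimension at most $d_\s d_\e$, obtaining $\ket{\Psi} \in \mathcal{H}_\se \otimes \mathcal{H}_{\se'}$ with $\ptr{\se'}{\kb{\Psi}{\Psi}} = \tau_\se$. Then $\sigma_\s = \ptr{\e \se'}{(U\otimes\mathbbm{1}_{\se'})\kb{\Psi}{\Psi}(U^\dagger\otimes\mathbbm{1}_{\se'})}$, and the environment seen by $\s$ in this purified picture has dimension $d_{E} := d_\e \cdot d_{\se'} \geq d_\e$. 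The subtlety is that $U$ acts only on $\mathcal{H}_\se$, not on the purifying factor, so the global pure state is \emph{not} Haar-random on the full space; but since tracing out $\se'$ is a contraction in trace norm and the relevant random variable $\normp{1}{\sigma_\s - \mathbbm{1}_\s/d_\s}$ only ever decreases under this extra partial trace, it suffices to bound the same quantity for the state $(U\otimes\mathbbm{1})\ket{\Psi}$ with $U$ Haar-random on $\mathcal{H}_\se$ alone.

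Next I would invoke the standard pure-state concentration result (as in Refs.~\cite{Popescu2006, Hutter2012}): for a Haar-random pure state on a bipartite space $\mathcal{H}_\s \otimes \mathcal{H}_{E}$ with $d_\s$ and $d_{E}$ the two factor dimensions, the reduced state $\rho_\s$ concentrates around $\mathbbm{1}_\s/d_\s$ with $\mathbb{E}\left[\normp{1}{\rho_\s - \mathbbm{1}_\s/d_\s}\right] \leq \sqrt{d_\s/d_{E}}$ and a Levy-type deviation bound $\Prob\left[\normp{1}{\rho_\s - \mathbbm{1}_\s/d_\s} \geq \sqrt{d_\s/d_{E}} + \epsilon\right] \leq 2\exp(-d_\s d_{E}\epsilon^2/(C))$ for the appropriate constant $C$ (here $16$, tracking the Lipschitz constant of the map $U \mapsto \normp{1}{\ptr{E}{U\kb{\Psi}{\Psi}U^\dagger} - \mathbbm{1}_\s/d_\s}$ with respect to the Hilbert-Schmidt metric, together with the Gaussian-type measure concentration on the unitary group). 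The key point I must verify carefully is that this Lipschitz constant and the concentration function for the Haar measure on $U(d_\s d_\e)$ behave correctly; the concentration inequality on the unitary group gives $\Prob[|f - \mathbb{E}f| \geq \epsilon] \leq 2\exp(-d_\s d_\e \epsilon^2/(4L^2))$ for an $L$-Lipschitz $f$, and the trace-distance function is $\sqrt{2}$-Lipschitz (or $2$-Lipschitz depending on normalisation), which I would need to pin down to land exactly on $16$.

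Then I combine: using $d_{E} \geq d_\e$ gives $\sqrt{d_\s/d_{E}} \leq \sqrt{d_\s/d_\e}$, so the mean shifts in the favourable direction, and using $d_{E} \geq d_\e$ again in the exponent (deviations only become \emph{more} suppressed with a larger purifying space) yields $\Prob\left[\normp{1}{\sigma_\s - \mathbbm{1}_\s/d_\s} \geq \sqrt{d_\s/d_\e} + \epsilon\right] \leq 2\exp(-d_\s d_\e\epsilon^2/16)$, which is exactly Eq.~\eqref{LevyBound}. The statement with $\s$ and $\e$ swapped follows by relabelling. The main obstacle I anticipate is the bookkeeping around the non-Haar global state: I need the monotonicity argument (partial trace is trace-norm contractive, and enlarging the environment can only improve concentration) to be airtight, because naively one might worry that restricting $U$ to act on a subspace of the purified system weakens the available randomness. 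Resolving this cleanly — essentially, observing that the pure-state bound is stated for \emph{any} fixed reference pure state and that our $\ket{\Psi}$ is such a fixed state, with $U$ acting on one tensor factor and the bound depending only on the dimensions of the factors on which we condition versus trace out — is the crux, and once it is in place the rest is the standard Levy's lemma machinery.
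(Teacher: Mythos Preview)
Your purification-and-contraction argument has a genuine gap at exactly the point you flag as the crux. The contraction step is vacuous: tracing out both $\e$ and $\se'$ from $(U\otimes\mathbbm{1}_{\se'})\kb{\Psi}{\Psi}(U^\dagger\otimes\mathbbm{1}_{\se'})$ gives $\sigma_\s$ on the nose, so there is no larger intermediate quantity that the extra partial trace over $\se'$ contracts. More importantly, the pure-state concentration result you want to invoke requires the global pure state to be Haar-random on all of $\mathcal{H}_\s\otimes\mathcal{H}_E$, and that is precisely what fails. Your claimed mean bound $\mathbb{E}\,\normp{1}{\sigma_\s-\mathbbm{1}_\s/d_\s}\leq\sqrt{d_\s/d_E}$ is therefore unjustified---and in fact false: the expected reduced purity over the unitary orbit of $\tau_\se$ is $(d_\s+d_\e)/(d_\s d_\e+1)$, giving only $\sqrt{d_\s/d_\e}$, independent of any purifying dimension. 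Likewise the tail exponent is controlled by the dimension of the group you actually sample from, namely $d_\s d_\e$, not by $d_\s d_E$; your ``$d_E\geq d_\e$ again in the exponent'' is incorrect, though fortuitously $d_\s d_\e$ is exactly what the lemma asks for.

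The paper avoids purification altogether and works directly with $f(U)=\normp{1}{\ptr{\e}{U\tau_\se U^\dagger}-\mathbbm{1}_\s/d_\s}$ as a function on $\mathrm{SU}(d_\s d_\e)$. It establishes Lipschitz constant $\eta=2$ via the reverse triangle inequality, contractivity of the trace norm under partial trace, and the bound $\normp{1}{U\tau U^\dagger-V\tau V^\dagger}\leq 2\normp{2}{U-V}$ (valid for arbitrary density operators $\tau$); it bounds the mean by $\sqrt{d_\s/d_\e}$ through the swap-trick purity computation, noting explicitly that this calculation is identical for pure and mixed $\tau_\se$; and it applies Levy's lemma together with Gromov's isoperimetric comparison to transfer the spherical concentration function to $\mathrm{SU}(d_\s d_\e)$. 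The insight you are missing is that every ingredient of the pure-state proof---Lipschitz bound, mean bound, concentration function---carries over verbatim to mixed $\tau_\se$, so no reduction to the pure case is needed.
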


Here, $\sigma_\s := \ptr{\e}{\sigma_\se}$ and the trace norm is defined for an operator $A$ as \unexpanded{$\normp{1}{A}:= \tr{\sqrt{A^\dagger A}}$}. Importantly, Lemma~\ref{lem:levymixed} bounds the trace distance of a reduced state from the maximally mixed state for Haar-randomly sampled joint interactions. While the bound in Eq.~\eqref{LevyBound} is the same as in the usual case of pure joint states, which follows from Levy's Lemma~\cite{LedouxBook}, the extension to mixed joint states is nontrivial, as the geometry of the corresponding space differs considerably. In fact, the following results do not hold for a na\"{i}ve application of the pure state result, because the trace distance from the identity of $\sigma_\s$, generated from a convex mixture $\sigma_\se$, cannot be directly upper bounded by an arbitrary component of the mixture. For a proof of this physically motivated application of Levy's Lemma, see Appendix~\ref{app:levyproof}~\footnote{Upon completion of the manuscript, we became aware that a similar result is used in relation to the decoupling protocol in Refs.~\cite{Hutter2012,Dupuis2014} based on Corollary 4.4.28 of~\cite{AndersonBook}.}.

\begin{proof} (Theorem~\ref{LargeLimitTheorem})
Consider the case where $d_\e \gg d_\s$.  For a Haar-randomly chosen unitary $U$, the state $M_\s$, defined in Eq.~\eqref{opMs}, is distributed exactly as $\sigma_\s$ in Lemma~\ref{lem:levymixed} with $\tau_\se= \left(\mathbbm{1}_\s/d_\s\right) \otimes\rho_\e$. Writing $\mu_\s :=  \normp{1}{M_\s - \mathbbm{1}_\s/d_\s} $, it follows immediately that: $\Prob \left[ {\mu_\s} \geq \sqrt{{d_\s}/{d_\e}} + \epsilon \right] \leq 2 \exp{ \left( - {d_\s d_\e \epsilon^2 }/{16} \right)}$. Choosing $\epsilon = 4\sqrt{(x d_\e  + \ln(2 d_\s))/(d_\s d_\e)}$ for some small $x > 1/d_\e$ gives
\begin{gather}\label{probmus}
\Prob\left[ {\mu_\s} \geq  \sqrt{\frac{d_\s}{d_\e}} +  4\sqrt{\frac{x d_\e  + \ln(2 d_\s)}{d_\s d_\e}}\right] \leq \frac{\exp{\left( - d_\e x \right)}}{d_\s}.
\end{gather}
As the reservoir dimension increases, independently of the inverse temperature $\beta$ and for a fixed $d_\s \geq 2$, the probability that ${\mu_\s}$ is greater than some vanishingly small quantity is exponentially diminishing in $d_\e$. Now, consider that $\mu_\s =\max_P \tr{P\left(M_\s - {\mathbbm{1}_\s}/{d_\s}\right)}$, where the maximisation is taken over all projection operators $P$. Using the fact that $\rho_s$ is a convex mixture of projectors, and multiplying $\mu_\s$ by $d_\s$, we have $d_\s \, \mu_\s \geq  d_\s  \tr{\rho_\s \left(M_\s - \frac{\mathbbm{1}_\s}{d_\s} \right)} =  \Gamma - 1$. By symmetry of the trace distance, we also have $- d_\s \, \mu_\s \leq \Gamma - 1$; thus we have upper bounded the fluctuations of $\Gamma$ about 1 by
\begin{gather}\label{projmax}
d_\s \mu_\s \geq | \Gamma - 1 | =: \mu.
\end{gather}
In summary, the magnitude of these fluctuations are upper bounded by a number that has high probability of being extremely small, as shown in Eq.~\eqref{probmus}. It follows that $\Gamma \to 1$ at least exponentially in the limit $d_\e \gg d_\s$. 
The Theorem can be proved for $d_\s \gg d_\e$ using a similar argument with the state ${M_\e}$, defined in Eq.~\eqref{opMs}.
\end{proof}

\begin{figure*}[ht]
\centering
\includegraphics[width=.99\linewidth]{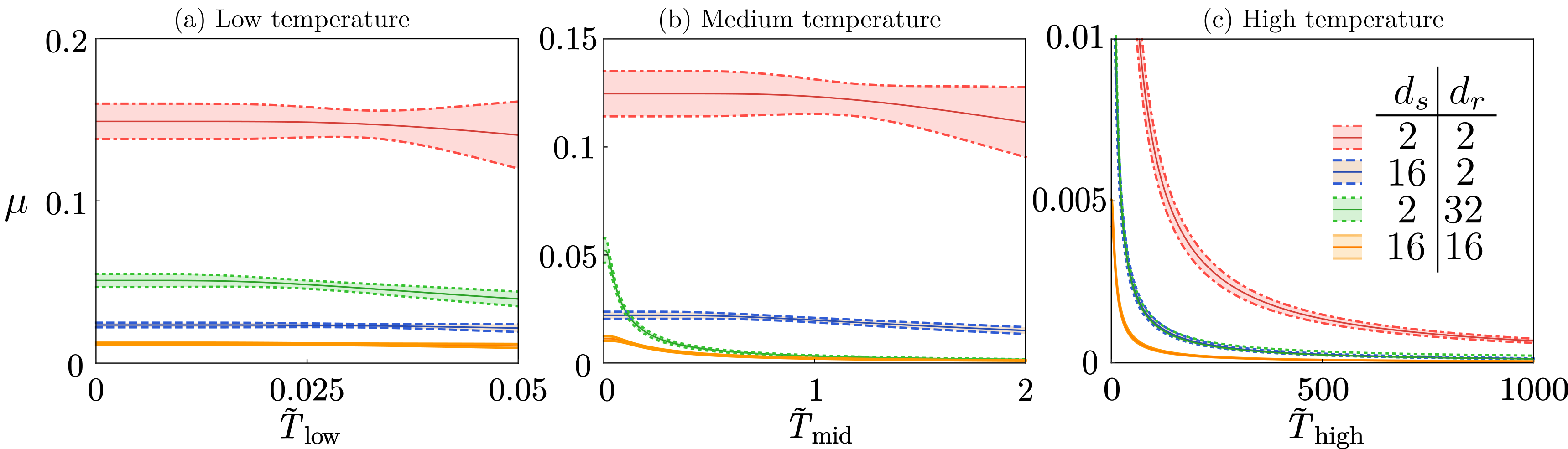}
\caption{(Color online). A fit of $\mu = |\Gamma - 1|$ to the function $a(1-\exp[-b/T])$ -- which we expect to approximate the size of deviations in the large $T$ limit -- for a variety of system and reservoir sizes in the {\bf (a)} low, {\bf (b)} intermediate and {\bf (c)} high temperature regimes. The central lines are the fitted values of $\mu$ and the shaded regions indicate the $\sim95\%$ confidence region for the fit. To calculate the fit, we sampled 1000 Haar-random unitary transformations applied to randomly chosen nonequilibrium system states for each panel in each of the following cases:  (i) $d_\s = d_\e = 2$ (red, dot-dashed); (ii) $d_\s = 16 \gg d_\e = 2$ (blue, dashed); (iii) $d_\s = 2 \ll d_\e = 32$ (green, dotted); and (iv) $d_\s = d_\e = 16$ (orange, solid). Panels (a) and (b) show that even at low or medium temperatures, when either subsystem dimension (or both) is large, $|\mu| \to 0$. Panel (c) shows that independently of the system and reservoir dimensions, $|\mu| \to 0$ in the high temperature limit. The only remaining case of interest is when the system and reservoir dimensions are both small and the temperature is low (see case (i) in panel (a)).} \label{TemperaturePlots}
\end{figure*}

The probabilistic statement we make in Theorem~\ref{LargeLimitTheorem} is based on the trace distance of the reduced post-dynamics state from the maximally mixed state. Lemma~\ref{lem:levymixed} states that as either dimension increases, typically, this distance is exponentially diminishing. This quantity upper bounds the absolute difference between $\Gamma$ and its mean value of 1, the implication being that the distribution of $\Gamma$ is sharply peaked. This further implies that the distribution of $\gamma = - \ln{(\Gamma)}$, which lower bounds the average heat, is sharply peaked around 0 in these asymptotic regimes.

Consider now the case where we have a large nonequilibrium system interacting with a large equilibrium reservoir, \emph{i.e.}, $d_\e \approx d_\s \gg 2$. From the above concentration of measure argument, it is not clear how $\Gamma$ behaves when $d_\s$ and $d_\e$ are comparable. With the following Theorem, we show that a fluctuation relation also emerges when the overall dimension becomes large:

\begin{thm}\label{HighTotDim}
When the system and reservoir dimensions are similar, we expect $\Gamma\rightarrow 1$ for large $d_{\se}=d_\s d_\e$. 
\end{thm}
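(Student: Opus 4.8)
The plan is to treat $\Gamma$ as a random variable over the Haar measure on $U(d_\se)$ and to pin it near $1$ using only its first two moments, via Chebyshev's inequality. It is convenient to write $\Gamma(U) = \tr{U^\dagger A U B}$ with $A := \mathbbm{1}_\s \otimes \rho_\e$ and $B := \rho_\s \otimes \mathbbm{1}_\e$, noting $\tr{A} = d_\s$, $\tr{B} = d_\e$, $\tr{A^2} = d_\s \tr{\rho_\e^2}$ and $\tr{B^2} = d_\e \tr{\rho_\s^2}$.

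First I would record that $\Gamma$ is unbiased. The elementary first-moment identity $\mathbb{E}_U[U^\dagger X U] = (\tr{X}/d_\se)\,\mathbbm{1}_\se$ (the same one that makes $\mathbbm{1}_\s/d_\s$ the relevant center in Lemma~\ref{lem:levymixed}, since $\mathbb{E}_U[M_\s] = \mathbbm{1}_\s/d_\s$) yields $\mathbb{E}_U[\Gamma] = \tr{A}\tr{B}/d_\se = 1$, so it suffices to show that the fluctuations of $\Gamma$ about $1$ vanish. For the second moment I would write $\Gamma^2 = \tr{(U^\dagger A U B)^{\otimes 2}}$ on $\mathcal{H}_\se^{\otimes 2}$ and use the second-moment twirl $\mathbb{E}_U[(U^{\otimes 2})^\dagger (A \otimes A) U^{\otimes 2}] = c_1 \mathbbm{1} + c_2 \mathbb{S}$, with $\mathbb{S}$ the swap of the two copies of $\mathcal{H}_\se$ and $c_1,c_2$ fixed by $\tr{A \otimes A} = (\tr{A})^2$ and $\tr{\mathbb{S}(A \otimes A)} = \tr{A^2}$. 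Tracing against $B \otimes B$ (and using $\tr{\mathbb{S}(B\otimes B)} = \tr{B^2}$), the cross terms collapse and the calculation condenses to
\begin{gather}
\mathrm{Var}[\Gamma] = \frac{\left(d_\s \tr{\rho_\s^2} - 1\right)\left(d_\e \tr{\rho_\e^2} - 1\right)}{d_\s^2 d_\e^2 - 1}.
\end{gather}

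The conclusion then follows at once: since $1/d_\s \le \tr{\rho_\s^2} \le 1$ (and likewise for $\e$), both factors in the numerator are nonnegative and bounded by $d_\s - 1$ and $d_\e - 1$ respectively, so $\mathrm{Var}[\Gamma] \le (d_\s - 1)(d_\e - 1)/(d_\se^2 - 1) \le 1/(d_\se + 1)$, which vanishes as $d_\se \to \infty$. Chebyshev's inequality then gives $\Prob[\,|\Gamma - 1| \ge \epsilon\,] \le 1/[(d_\se + 1)\epsilon^2]$, i.e. $\Gamma \to 1$ in probability as $d_\se \to \infty$, as claimed; for fixed $d_\se$ the concentration is tightest when one subsystem is close to maximally mixed and weakest when both are far from it.

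I expect the main obstacle to be the bookkeeping in the second-moment Haar integral: one must track how $\mathbb{S}$ restricts to and factorizes across the $\s$ and $\e$ tensor legs when $A$ and $B$ are supported on complementary subsystems, and verify that the $d_\se$-dependent coefficients $c_1, c_2$ conspire to cancel the $O(1)$ contributions so that the surviving variance is genuinely $O(1/d_\se)$. It is also worth flagging that this moment argument delivers only polynomial suppression of the fluctuations — weaker than the exponential suppression in Theorem~\ref{LargeLimitTheorem} — and that a Lipschitz bound on $U \mapsto \Gamma(U)$ fed into Levy's Lemma on $U(d_\se)$ would upgrade this to exponential decay in $\max(d_\s, d_\e)$, i.e. in $\sqrt{d_\se}$ when the dimensions are comparable; this is presumably why the statement is phrased cautiously.
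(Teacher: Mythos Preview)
Your argument is correct and, in fact, more rigorous than the paper's. The paper's proof is heuristic: it rewrites $\Gamma = \sum_{nmpq}\lambda^{(\e)}_m\lambda^{(\s)}_p\,|\bra{\s_n\e_m}U\ket{\s_p\e_q}|^2$ and then appeals to the statement that, for large $d_\se$, Haar-random unitaries relate bases that are approximately mutually unbiased, so each squared matrix element is close to $1/d_\se$ and the sum collapses to $1$. No rate and no probability bound are given; the claim is essentially a mean-field replacement of $|U_{ij}|^2$ by its expectation.

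Your route is genuinely different. By computing the first two Haar moments of $\Gamma(U)=\tr{U^\dagger A U B}$ exactly and arriving at the closed form $\mathrm{Var}[\Gamma]=(d_\s\tr{\rho_\s^2}-1)(d_\e\tr{\rho_\e^2}-1)/(d_\se^2-1)\le 1/(d_\se+1)$, you turn the heuristic into a quantitative concentration statement via Chebyshev. This buys several things the paper's argument does not: (i) an explicit $O(1/d_\se)$ variance scaling, (ii) a visible dependence on the purities of $\rho_\s$ and $\rho_\e$ (your remark that concentration is tightest when one subsystem is near maximally mixed is a real structural observation absent from the paper), and (iii) a rigorous probability bound rather than a limiting statement. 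Your closing comment that a Lipschitz estimate plus Levy on $\mathrm{SU}(d_\se)$ would upgrade this to exponential concentration is also well taken --- that is precisely the machinery the paper uses for Theorem~\ref{LargeLimitTheorem} but does not invoke here. The only caveat you flagged yourself: the second-moment Weingarten bookkeeping must be done carefully, but your displayed variance is correct.
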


\begin{proof}
We can rewrite $\Gamma$ in terms of the eigenbases of $\rho_\s = \sum_{k} \lambda_k^{(\s)} \kb{\s_k}{\s_k}$ and $\rho_\e = \sum_{k} \lambda_k^{(\e)} \kb{\e_k}{\e_k}$: $\Gamma =  \sum_{nmpq} \lambda^{(\e)}_m \lambda^{(\s)}_p |\bra{\s_n \e_m} U \ket{\s_p \e_q}|^2$, where $\sum_{k} \lambda_k^{(\s)}=\sum_{k} \lambda_k^{(\e)}=1$. As the joint $\se$ dimension becomes large, any two bases related by a Haar-random unitary will tend to be mutually unbiased~\cite{BengtssonBruzda2007}; that is, the matrix elements $\bra{\s_n \e_m} U \ket{\s_p \e_q}\rightarrow 1/\sqrt{d_\s d_\e}$. In this limit, we have $\Gamma \rightarrow  \sum_{nmpq} \lambda^{(\e)}_m \lambda^{(\s)}_p /(d_\s d_\e) = 1$.
\end{proof}

In the following Theorem, we show that a fluctuation relation also emerges in the high temperature limit:

\begin{thm}\label{HighTempThm}
As temperature increases, $\Gamma \to 1$ at least linearly with inverse temperature $\beta$.
\end{thm}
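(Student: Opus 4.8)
\emph{Proof proposal (Theorem~\ref{HighTempThm}).} The plan is to feed a small-$\beta$ expansion of the reservoir Gibbs state into the closed form $\Gamma = \tr{U^\dagger\,\mathbbm{1}_\s\otimes\rho_\e\,U\,\rho_\s\otimes\mathbbm{1}_\e}$ of Eq.~\eqref{gpmbound}. The crucial point is that at infinite temperature, $\beta=0$, one has $\rho_\e=\mathbbm{1}_\e/d_\e$, and then $\Gamma = \tfrac{1}{d_\e}\tr{U^\dagger \mathbbm{1}_{\se} U\,\rho_\s\otimes\mathbbm{1}_\e} = \tfrac{1}{d_\e}\tr{\rho_\s\otimes\mathbbm{1}_\e} = 1$, independently of $U$ and $\rho_\s$ (this is the same computation used for thermal operations). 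The theorem therefore reduces to showing that the correction picked up as $\beta$ moves off zero is $O(\beta)$, with a coefficient bounded independently of the process.

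Concretely, I would first write $\rho_\e = \mathbbm{1}_\e/d_\e + \Delta_\e$ with $\Delta_\e := \rho_\e - \mathbbm{1}_\e/d_\e$, which by the computation above gives the exact identity $\Gamma - 1 = \tr{U^\dagger(\mathbbm{1}_\s\otimes\Delta_\e)U\,(\rho_\s\otimes\mathbbm{1}_\e)}$. A Hölder estimate together with unitary invariance of the operator norm then yields $|\Gamma - 1| \le \normp{1}{\rho_\s\otimes\mathbbm{1}_\e}\,\normp{\infty}{\mathbbm{1}_\s\otimes\Delta_\e} = d_\e\,\normp{\infty}{\Delta_\e}$. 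Finally I would bound $\normp{\infty}{\Delta_\e}$ directly from the spectrum of $H_\e$: writing $p_i(\beta) = \expo^{-\beta E_i}/Z$ for the eigenvalues of $\rho_\e$, each eigenvalue $p_i(\beta) - 1/d_\e$ of $\Delta_\e$ vanishes at $\beta=0$ and has $\beta$-derivative $p_i(\beta)\big(\braket{H_\e}_\beta - E_i\big)$, whose modulus is at most $\tfrac{1}{d_\e}\expo^{\beta\Delta E}\,\Delta E$ with $\Delta E := E_{\max}-E_{\min}$ the spectral width of $H_\e$ (using $\braket{H_\e}_\beta\in[E_{\min},E_{\max}]$ and $p_i(\beta)\le\expo^{\beta\Delta E}/d_\e$ for $\beta\ge 0$). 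Integrating on $[0,\beta]$ gives $\normp{\infty}{\Delta_\e}\le\tfrac{1}{d_\e}(\expo^{\beta\Delta E}-1)$, so that $|\Gamma - 1| \le \expo^{\beta\Delta E}-1 = \beta\,\Delta E + O(\beta^2)$, i.e.\ $\Gamma\to 1$ at least linearly in $\beta$. (A quicker but less quantitative route is to Taylor expand $\rho_\e = \mathbbm{1}_\e/d_\e - \tfrac{\beta}{d_\e}\big(H_\e - \tfrac{\tr{H_\e}}{d_\e}\mathbbm{1}_\e\big) + O(\beta^2)$ and read off that the first-order term of $\Gamma-1$ is $-\tfrac{\beta}{d_\e}\tr{U^\dagger\,\mathbbm{1}_\s\otimes(H_\e - \bar E\,\mathbbm{1}_\e)\,U\,\rho_\s\otimes\mathbbm{1}_\e}$, manifestly linear in $\beta$ and bounded in magnitude by $\beta\,\normp{\infty}{H_\e-\bar E\,\mathbbm{1}_\e}$.)

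Rather than a genuine technical obstacle, the main thing to get right here is the interpretation of ``at least linearly'': the first-order coefficient really does depend on $U$ and $\rho_\s$, so the substantive content is that its magnitude is controlled by a process-independent constant (the spectral width of the reservoir Hamiltonian) times $\beta$, which is exactly what the Hölder step delivers, and that $H_\e$ is held fixed while $\beta\to 0$. The only auxiliary point to check is that the expansion of $\expo^{-\beta H_\e}/Z$ is legitimate, which is immediate since $H_\e$ is finite-dimensional and $Z>0$, so $\beta\mapsto\rho_\e$ is real-analytic near $\beta=0$.
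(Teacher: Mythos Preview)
Your argument is correct and yields the explicit non-asymptotic bound $|\Gamma-1|\le \expo^{\beta\Delta E}-1$. It differs from the paper's proof in the intermediate inequality used to control $\Gamma-1$ in terms of $\Delta_\e=\rho_\e-\mathbbm{1}_\e/d_\e$. The paper routes through CPTP contractivity of the trace norm: it introduces the map $\mathcal{E}_\se(\sigma_\e)=\ptr{\e}{U^\dagger(\mathbbm{1}_\s/d_\s)\otimes\sigma_\e\,U}$, uses contractivity to get $\tilde\mu:=\normp{1}{\Delta_\e}\ge\normp{1}{M_\s-\mathbbm{1}_\s/d_\s}=\mu_\s$, and then invokes the projection-maximisation inequality $d_\s\mu_\s\ge|\Gamma-1|$ from Eq.~\eqref{projmax} to obtain $|\Gamma-1|\le d_\s\tilde\mu$, which is finally expanded for small $\beta$. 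So the paper bounds by $d_\s\normp{1}{\Delta_\e}$, whereas you bound by $d_\e\normp{\infty}{\Delta_\e}$; neither dominates the other in general, but both vanish linearly as $\beta\to 0$. The paper's route is more modular---it recycles the $M_\s$, $\mu_\s$ machinery already set up for Theorem~\ref{LargeLimitTheorem}---while your H\"older step is more elementary, avoids any reference to the earlier concentration argument, and delivers a fully quantitative estimate valid for all $\beta\ge 0$ rather than only a leading-order statement.
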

\begin{proof}
Consider the completely-positive trace-preserving (CPTP) map $\mathcal{E}_\se : \mathcal{L}(\mathcal{H}_\e) \to \mathcal{L}(\mathcal{H}_\s)$: $\mathcal{E}_\se (\sigma_\e) := \ptr{\e}{U^\dagger {\mathbbm{1}_\s}/{d_\s} \otimes \sigma_\e \, U}$. By the contractivity of the trace distance under CPTP operations, we have  $\tilde{\mu} := \normp{1}{\rho_\e - {\mathbbm{1}_\e}/{d_\e}} \geq \normp{1}{\mathcal{E}_\se \left( \rho_\e - {\mathbbm{1}_\e}/{d_\e} \right)}$. Expanding the action of $\mathcal{E}_\se$ gives
\begin{gather}
\tilde{\mu} \ge \normp{1}{\ptr{\e}{U^\dagger \frac{\mathbbm{1}_\s}{d_\s} \otimes \rho_\e \, U} -  \ptr{\e}{U^\dagger \frac{\mathbbm{1}_\s}{d_\s} \otimes \frac{\mathbbm{1}_\e}{d_\e} U}} = \mu_\s,
\end{gather}
where $\mu_\s = \normp{1}{M_\s - {\mathbbm{1}_\s}/{d_\s}}$. Combining this with Eq.~\eqref{projmax}, we have $d_\s \tilde{\mu} \ge d_\s \mu_\s \ge \mu$.

Next, taking the limit $\beta \to 0$, we have $ \lim_{\beta \to 0} \tilde{\mu} = \lim_{\beta \to 0} |\sum_{k} \left({\expo^{- \beta E_k}}/{Z} - {1}/{d_\e}\right)|$. As $\beta \to 0$, $Z \to d_\e$ and we can expand the exponential. The zeroth order term cancels with the second term in the last equation, giving: $\lim_{\beta \to 0} \tilde{\mu} = ({1}/{d_\e}) |\sum_{k} \sum_{n=1}^\infty (- \beta E_k)^n/{(n!)}|$. Since $\tilde{\mu}$ is behaves linearly with $\beta$ in this limit, it follows that $\Gamma \to 1$ at least linearly.
\end{proof}

\textbf{Speed of convergence.---}A particularly nice feature of our results derived above is that they bound the rate at which the fluctuation relations arise in various regimes. In order to test how fast $\Gamma$ exhibits the results of Theorems~\ref{LargeLimitTheorem}, \ref{HighTotDim}, \& \ref{HighTempThm}, we now explore the statistics of simulated dynamics within the parameter space $(d_\s, d_\e, \beta)$. We construct processes by Haar-randomly sampling unitaries from the joint-space and subsequently define the system and reservoir Hamiltonians
\begin{gather}
H_\s = i \, {\rm tr}_\e[\log (U)] / t 
\quad \mbox{and} \quad
H_\e = i \, {\rm tr}_\s[\log (U)] / t,
\end{gather}
where we choose $t=1$ to fix the units of energy. The notions of high and low temperature depend on the energy level structure of $H_\e$, so we must be careful in comparing processes. At high temperature, we expect significant occupation of all reservoir states, implying $\beta^{-1}\gg |E_N-E_0|$, where $E_N$ and $E_0$ are the highest and lowest reservoir eigenenergies respectively. On the other hand, at low temperature, even the first excited state (with energy $E_1$) has little population, requiring $\beta^{-1}\ll |E_1-E_0|$. Between these two regimes, the temperature is of the same order as the energy splittings in $H_\e$. These considerations motivate the definition of the scaled temperature parameters
\begin{align}\label{eqhighlowtemp}
&\tilde{T}_{\text{low}} := (\beta |E_1 - E_0|)^{-1}, \quad
\tilde{T}_{\text{high}} := (\beta |E_N - E_0|)^{-1}, \notag \\ 
& \quad \mbox{and} \quad \quad \tilde{T}_{\text{mid}} := \frac{N-1}
{\beta \, \sum_{n=1}^{N} |E_{n} - E_{n-1}|},
\end{align}
which are used in the low, high and intermediate temperature regimes respectively.


Theorems~\ref{LargeLimitTheorem}, \ref{HighTotDim} and \ref{HighTempThm} manifest themselves in Fig.~\ref{TemperaturePlots}. Plotted in each panel (temperature regime) is a fit of $\mu$ to data from a large number of Haar-randomly sampled interactions for a variety of system and reservoir dimensions. The validity of Theorems~\ref{LargeLimitTheorem} and \ref{HighTotDim} can be seen in any temperature regime: in the low dimensional case (red, dot-dashed), $\mu$ tends to be the largest, with $\mu$ smaller on average for all other cases, where either dimension is large. Reading across the panels of Fig.~\ref{TemperaturePlots} shows that as temperature increases, $\mu \to 0$ independently of $d_\s, \, d_\e$ (note the scale), demonstrating Theorem~\ref{HighTempThm}. Furthermore, in Appendix~\ref{app:boundcomparison} we show that in the cases where the fluctuation relation arises, $\gamma$ almost always provides a tighter bound for the heat than previously known bounds. We now discuss implications and the broader relevance of our findings.

\section{Discussion}

Our ability to coherently control nonequilibrium quantum systems is crucial to developing quantum technologies. Functional quantum technologies must implement irreversible operations, necessarily generating heat which leads to decoherence that negatively impacts performance. In this Article, we have demonstrated the emergence of a fluctuation relation for the heat generated in typical nonequilibrium Landauer processes. The implication is that the heat dissipated into the reservoir in a typical open process is almost always positive. This significantly enhances our understanding of Landauer heat and open evolution of systems in contact with thermal states, as previous studies have been unable to make process-independent statements on the average heat exchanged during nonequilibrium interactions.

Intuition based on the Second Law of Thermodynamics suggests that having the heat typically flow towards the reservoir is a peculiar feature: one might expect that when the reservoir begins at a higher temperature than the system, the reservoir should cool on average. However, due to Theorems~\ref{LargeLimitTheorem}, \ref{HighTotDim} and \ref{HighTempThm}, when the dimension of either subsystem is large or the temperature of the reservoir is high, the heat almost always flows towards the reservoir. Thermodynamic intuition breaks down because it assumes the interaction takes place near equilibrium, where the open evolution cannot perturb a thermal reservoir; in the nonequilibrium setting we consider, this assumption is broken. Randomly sampled unitary processes will generally correspond to Hamiltonians with significant interaction terms that are highly entangling~\cite{Popescu2006}; the local state of the reservoir is therefore almost always more mixed after the interaction than before. The connection between nonequilibrium versions of LP and the Second Law remains unclear and requires further investigation.

Admittedly, real experiments do not have access to random unitary operations, which are often dismissed as ``unphysical''. However, performing a number of interesting tasks efficiently, such as securely erasing quantum information, the decoupling protocol~\cite{Szehr2013,Dupuis2014}, device verification~\cite{Emerson2005,Knill2008} and thermalisation~\cite{Popescu2006,Goldstein2006,Hutter2012}, require operations that mimic sampling from the full space of operations. Indeed, complex physical phenomena are well-approximated by unitary $t$-designs, which agree with the first $t$ moments of the Haar distribution~\cite{Dankert2009,Nakata2017}; especially when there are few particles involved. Here, the concentration of physically relevant random unitary interactions is unlikely to be particularly small. However, our results show that even in these cases, a fluctuation relation arises exponentially quickly. This is somewhat surprising and warrants further study of LP in this microscopic regime, \emph{e.g.}, by relaxing our speed-of-convergence results through a concentration of measure argument over a $t$-design rather than the entire Haar distribution. Moreover, our work will become increasingly important as quantum devices become larger and hotter, since the regimes in which the fluctuation relation arises quickly are exactly those for which the GPM bound typically provides the tightest bound on the heat generated (see Appendix~\ref{app:boundcomparison}). 

On the other hand, our work naturally opens the door to developing tighter bounds on the heat dissipated in a process by including physically motivated constraints within our framework: this might, \emph{e.g.}, restrict the operation space to thermal operations or those generated from Hamiltonians with local interactions~\cite{Goold2015,Lorenzo2015}. The fact that a randomly applied operation will generate heat calls further attention to our need to manage it when building quantum technologies of the future: \emph{How do we best approximate random operations to erase information with minimum heat expenditure?} \emph{How similar does the series of operations in a highly entangling quantum circuit look to a random Landauer process?} Such questions cannot go unanswered if we are to leap into a world run on quantum technologies.

\textbf{Acknowledgements.---}We thank Lucas C{\'e}leri, John Goold and Arul Lakshminarayan for valuable discussions and an anonymous referee for insightful feedback. P. T. is supported by the Australian Government RTP and J. L. William Scholarships. 

\appendix

\section{Proof of Lemma~\ref{lem:levymixed}}
\label{app:levyproof}

The typical application of Levy's Lemma to reduced quantum states is as follows~\cite{Popescu2006}
\begin{lem} \label{lem:levypure}
For any pure state $\phi_\se = U \psi_\se U^\dagger$, where $\psi_\se$ is a fixed system-reservoir pure state and $U$ is a Haar-randomly sampled unitary, and for arbitrary $\epsilon > 0$, the distribution of distances between the reduced density matrix of the system $\phi_s = \ptr{\e}{\phi_\se}$ and the maximally mixed state $\mathbbm{1}/d_\s$ satisfies
\begin{gather}  \label{eq:levyboundpure}
\Prob\left[ \normp{1}{\phi_\s - \frac{\mathbbm{1}_\s}{d_\s}} \geq \sqrt{\frac{d_\s}{d_\e}} + \epsilon \right] \leq2 \exp{ \left( - \frac{d_\s d_\e \epsilon^2 }{16} \right)}.
\end{gather}
\end{lem}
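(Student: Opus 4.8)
The plan is to exploit the fact that, since $\psi_\se$ is pure and the unitary group acts transitively on pure states, the ensemble $\phi_\se = U \psi_\se U^\dagger$ with Haar-random $U$ is exactly the uniform (unitarily invariant) distribution over pure states of $\mathcal{H}_\s \otimes \mathcal{H}_\e$. Equivalently, the associated state vector $\ket{\phi}$ is a uniformly random unit vector on the real sphere $S^{2 d_\s d_\e - 1}$. This reduces the claim to a concentration statement for the scalar function
\begin{gather}
f(\ket{\phi}) := \normp{1}{\phi_\s - \frac{\mathbbm{1}_\s}{d_\s}}, \qquad \phi_\s = \ptr{\e}{\kb{\phi}{\phi}},
\end{gather}
to which I would apply Levy's Lemma~\cite{LedouxBook}: for an $L$-Lipschitz function on a sphere of dimension $d_\se=d_\s d_\e$, deviations from the mean exceeding $\epsilon$ occur with probability at most $2\exp(-d_\se \epsilon^2 / (4 L^2))$. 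Two ingredients then remain: an upper bound on the mean $\langle f\rangle$ and an estimate of the Lipschitz constant $L$.

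First I would control the mean. Using $\normp{1}{A} \le \sqrt{\mathrm{rank}(A)}\,\normp{2}{A}$ with $\mathrm{rank}(\phi_\s - \mathbbm{1}_\s/d_\s) \le d_\s$, together with $\normp{2}{\phi_\s - \mathbbm{1}_\s/d_\s}^2 = \tr{\phi_\s^2} - 1/d_\s$ and Jensen's inequality, gives
\begin{gather}
\langle f\rangle \le \sqrt{\langle f^2\rangle} \le \sqrt{d_\s\left(\langle \tr{\phi_\s^2}\rangle - \tfrac{1}{d_\s}\right)}.
\end{gather}
The Haar average of the reduced purity is the standard result $\langle \tr{\phi_\s^2}\rangle = (d_\s + d_\e)/(d_\s d_\e + 1)$~\cite{Popescu2006}, obtained by averaging the swap operator over the joint sphere. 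Substituting it yields $\langle f\rangle \le \sqrt{(d_\s^2 - 1)/(d_\s d_\e + 1)} < \sqrt{d_\s/d_\e}$, which is precisely the origin of the $\sqrt{d_\s/d_\e}$ offset in Eq.~\eqref{eq:levyboundpure} and ensures that the event $f \ge \sqrt{d_\s/d_\e} + \epsilon$ is contained in $f - \langle f\rangle \ge \epsilon$.

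Second I would bound the Lipschitz constant of $f$ with respect to the Euclidean distance of state vectors. Chaining the reverse triangle inequality for the trace norm, the contractivity of the trace norm under the partial trace, and the elementary pure-state relation $\normp{1}{\kb{\phi}{\phi} - \kb{\phi'}{\phi'}} \le 2\,\normp{2}{\ket{\phi} - \ket{\phi'}}$, one finds $|f(\ket{\phi}) - f(\ket{\phi'})| \le 2\,\normp{2}{\ket{\phi} - \ket{\phi'}}$, so $L \le 2$. Feeding $d_\se = d_\s d_\e$, the mean bound $\langle f\rangle \le \sqrt{d_\s/d_\e}$, and $L \le 2$ into Levy's Lemma then reproduces Eq.~\eqref{eq:levyboundpure}.

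I expect the main obstacle to be the bookkeeping of the exponent's constant rather than the overall structure. With the Lipschitz estimate $L \le 2$ and a Levy's Lemma of the form $\Prob[|f - \langle f\rangle| \ge \epsilon] \le 2\exp(-d_\se \epsilon^2/(4L^2))$ in the joint dimension $d_\se = d_\s d_\e$, the $L = 2$ bound produces exactly the factor $d_\s d_\e \epsilon^2/16$; a looser Lipschitz constant or a different normalisation of Levy's Lemma would shift this. I would therefore pair the correct Levy constant with the Lipschitz estimate, and verify that the mean offset is genuinely bounded above by $\sqrt{d_\s/d_\e}$ rather than merely of that order, since the crisp form of the inequality hinges on that exact comparison.
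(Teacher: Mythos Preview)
Your argument is correct and is precisely the standard derivation: identify the Haar orbit of a pure state with the uniform measure on $S^{2d_\s d_\e-1}$, bound the mean via $\langle f\rangle \leq \sqrt{d_\s(\langle\tr{\phi_\s^2}\rangle-1/d_\s)}<\sqrt{d_\s/d_\e}$, establish the Lipschitz constant $L\leq 2$ by the reverse triangle inequality and partial-trace contractivity, and feed both into Levy's Lemma on the sphere to obtain the exponent $d_\s d_\e\epsilon^2/16$.

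The paper, however, does not supply its own proof of this statement: Lemma~\ref{lem:levypure} is quoted as a known result from Ref.~\cite{Popescu2006} and serves only as motivation. What the paper \emph{does} prove is the mixed-state generalisation (Lemma~\ref{lem:levymixed}), and it is worth noting that its proof follows exactly the same three-step template you use---Lipschitz bound $\eta=2$, mean bound via the Haar-averaged purity $(d_\s+d_\e)/(d_\s d_\e+1)$, then concentration---but carried out on the group manifold ${\rm SU}(d_{\se})$ with the Hilbert-Schmidt metric rather than on the sphere of state vectors. Because ${\rm SU}(d_{\se})$ is not a sphere, the paper needs the additional ingredient of Gromov's isoperimetric comparison theorem to transfer the spherical concentration function $\alpha_{S^{d_{\se}^2-1}(R_0)}$ to the unitary group. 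Your sphere-based argument avoids that step, which is the genuine simplification afforded by the pure-state assumption and the transitivity of the unitary action on pure states.
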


If our states $M_\s$ and $M_\e$ were being generated from pure joint states, we could directly apply Lemma~\ref{lem:levypure} to achieve the desired result in proving Theorem~\ref{LargeLimitTheorem}. However, they are instead being generated from mixed joint states: \emph{i.e.}, $M_\s$ is distributed as $\ptr{\e}{U \tau_\se U^\dagger}$ where $\tau = (\mathbbm{1}_\s/d_\s) \otimes \rho_\e$ (and similarly for $M_\e$). The following argument shows why the standard version of Levy's Lemma does not necessarily hold for the mixed case and motivates our proof of Theorem~\ref{lem:levymixed}.

The initial $\se$ state can always be decomposed as a mixture of pure states: $\tau_\se = \sum_{kl}\lambda_{kl}\ket{kl}\bra{kl}$. We can therefore write
\begin{align}
\mu_s =& \left\|{\rm tr}_\e\left[\sum_{kl}\lambda_{kl}U\ket{kl}\bra{kl} U^\dagger\right] - \mathbbm{1}/d_s\right\|_1.
\end{align}
Defining
\begin{align}
\mu_{kl} :=& \left\|{\rm tr}_\e\left[U\ket{kl}\bra{kl} U^\dagger\right] - \mathbbm{1}/d_s\right\|_1,
\end{align}
we have that $\mu_\s\leq \sum_{kl}\lambda_{kl} \mu_{kl}$ ($\leq \max_{kl} \mu_{kl}$), since the partial trace and the trace norm are both convex functions. While Eq.~\eqref{eq:levyboundpure} would apply to each of the $\mu_{kl}$ with $U$ sampled independently, the upper bound on $\mu_\s$ depends on the full set $\{\mu_{kl}\}$ for each given $U$. Lemma~\ref{lem:levypure} makes no statistical statements about the latter.

Furthermore, since the space of density matrices with fixed spectrum is a geometrically different space from that of pure states (it is a flag manifold rather than a complex projective space), the usual proof of Lemma~\ref{lem:levypure} cannot be trivially modified.
We now proceed to prove our Theorem~\ref{lem:levymixed}.

\begin{proof}
The proof hinges on a version of the well known Lemma by Levy~\cite{LedouxBook}:
\begin{lem}[\emph{\textbf{Levy}}] Consider a manifold $M$ endowed with a metric $g$ and measure $\mu$, and a Lipschitz continuous function $f:M\rightarrow \mathbb{R}$ with Lipschitz constant $\eta$, \textit{i.e.}, $f$ satisfies $|f(x)-f(y)|\leq \eta \|x-y\|_g$ $\forall x,y \in M$. The value of the function is concentrated around its expectation value $\mathbb{E}_x f$ according to the distribution
\begin{gather}
{\rm Prob}\left(f(x)\geq \epsilon + \mathbb{E}_x f\right) \leq 2\alpha_M(\epsilon/\eta),
\end{gather}
where $\alpha_M(x)$ is a concentration function for $M$, defined as (an upper bound on) the measure of the set of points in the space more than a distance $x$ from the minimal-boundary volume enclosing half the space. 
\label{lem:Levy}
\end{lem}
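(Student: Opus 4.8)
The statement is a geometric concentration inequality in which the concentration function $\alpha_M$ is treated as a \emph{known} property of the space $(M,g,\mu)$; the content is to convert this isoperimetric data into a tail bound for an arbitrary Lipschitz $f$. The plan is to prove the inequality first around a \emph{median} $m_f$ of $f$, where the argument is clean and the constant is $1$, and then to transfer it to the expectation $\mathbb{E}_x f$, which is where the factor of $2$ is spent.

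First I would fix a median $m_f$ of $f$ and set $A:=\{x\in M : f(x)\le m_f\}$, so that $\mu(A)\ge \tfrac12$ by definition of the median. Let $A_r:=\{x : \inf_{y\in A}\|x-y\|_g\le r\}$ denote the $r$-blowup of $A$. The key observation is purely metric: if $f(x) > m_f + \epsilon$, then for every $y\in A$ we have $f(x)-f(y) > \epsilon$, so the Lipschitz bound $|f(x)-f(y)|\le \eta\, \|x-y\|_g$ forces $\|x-y\|_g > \epsilon/\eta$; taking the infimum over $y\in A$ gives $\inf_{y\in A}\|x-y\|_g > \epsilon/\eta$, i.e.\ $x\notin A_{\epsilon/\eta}$. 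Hence $\{f > m_f+\epsilon\}\subseteq M\setminus A_{\epsilon/\eta}$, and since $\mu(A)\ge\tfrac12$ the definition of the concentration function yields
\begin{gather}
\Prob\left(f(x)\ge m_f+\epsilon\right) \le 1-\mu\!\left(A_{\epsilon/\eta}\right) \le \alpha_M(\epsilon/\eta).
\end{gather}
Applying the identical argument to $-f$ (whose median is $-m_f$ and whose Lipschitz constant is again $\eta$) gives the matching lower tail, so that $f$ concentrates two-sidedly about $m_f$, and in particular $\Prob\left(|f-m_f|\ge \epsilon\right)\le 2\alpha_M(\epsilon/\eta)$.

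The remaining step is to replace the median $m_f$ by the expectation $\mathbb{E}_x f$. I would control the gap by integrating the two-sided tail bound just obtained,
\begin{gather}
\left|\mathbb{E}_x f - m_f\right| \le \mathbb{E}_x|f-m_f| = \int_0^\infty \Prob\left(|f-m_f|\ge t\right)\,\dif t \le 2\int_0^\infty \alpha_M(t/\eta)\,\dif t,
\end{gather}
and then write $\{f\ge \mathbb{E}_x f + \epsilon\}\subseteq \{f\ge m_f + (\epsilon - |\mathbb{E}_x f - m_f|)\}$, reabsorbing the displacement of the reference level using the monotonicity of $\alpha_M$. The factor of $2$ in the claimed inequality then accommodates both the two-sided median concentration and this small mean--median shift.

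The step I expect to be the main obstacle is precisely this transfer from the median to the expectation: it is clean only when $\alpha_M$ decays fast enough for the integral above to converge (as it does for the spaces relevant here, where $\alpha_M$ is Gaussian), and one must verify that translating the reference level from $m_f$ to $\mathbb{E}_x f$ leaves the concentration profile essentially unchanged up to the stated constant. The geometric core, by contrast, is immediate once $\alpha_M$ is granted as the concentration function of $(M,g,\mu)$, since it reduces entirely to the Lipschitz inclusion $\{f>m_f+\epsilon\}\subseteq M\setminus A_{\epsilon/\eta}$ established above.
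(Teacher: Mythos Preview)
The paper does not actually prove this lemma: it is quoted as a known result from Ledoux~\cite{LedouxBook} and used as a black box inside the proof of Lemma~\ref{lem:levymixed}. There is therefore no ``paper's own proof'' to compare your attempt against.

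That said, your argument is the standard one and the median step is entirely correct: the inclusion $\{f>m_f+\epsilon\}\subseteq M\setminus A_{\epsilon/\eta}$ together with $\mu(A)\ge\tfrac12$ gives $\Prob(f\ge m_f+\epsilon)\le\alpha_M(\epsilon/\eta)$ directly from the definition of the concentration function, and the two-sided bound follows by applying the same to $-f$. Your identification of the median-to-mean transfer as the delicate point is also accurate. As written, your last step does not literally produce the stated form $2\alpha_M(\epsilon/\eta)$: reabsorbing the shift $|\mathbb{E}_x f - m_f|$ into the argument of $\alpha_M$ changes the scale at which $\alpha_M$ is evaluated rather than the prefactor, so without further hypotheses one obtains $\alpha_M((\epsilon-\delta)/\eta)$ for some $\delta$ rather than $2\alpha_M(\epsilon/\eta)$. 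In the application at hand this is harmless, since $\alpha_M$ is Gaussian and such shifts are absorbed into constants inside the exponent; the form quoted in the paper should be read as a schematic statement of Levy's lemma rather than a sharp inequality, and your sketch is adequate for that purpose.
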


Consider the function $f(U) = \|\sigma_\s(U)-\mathbbm{1}/d_\s\|_1$, the trace distance of the reduced state $\sigma_\s(U) = {\rm tr}_\e[U\tau_\se U^\dagger]$ from the maximally mixed state. Using a reverse triangle inequality and the contractivity of the trace norm under partial trace, we have, for any $U,V\in {\rm SU}(d_{\se})$
\begin{align}
|f(U)-f(V)|=&\left| \left\|\sigma_\s(U)-\frac{\mathbbm{1}}{d_\s}\right\|_1-\left\|\sigma_\s(V)-\frac{\mathbbm{1}}{d_\s}\right\|_1 \right| \nonumber \\ \leq &\, \left\|\sigma_\s(U)-\sigma_\s(V)\right\|_1 \nonumber \\
\leq &\, \left\| U\sigma_{\se}U^\dagger - V\sigma_{\se}V^\dagger \right\|_1\nonumber \\ \leq & 2\left\|U-V\right\|_2, \label{eq:Lipschitz1}
\end{align}
where, for the final inequality, we have used Lemma 1 from Ref.~\cite{EpsteinWhaley2016}, which relates the penultimate quantity to the Hilbert-Schmidt distance ($\|X\|_2=\sqrt{\tr{XX^\dagger}}$) between the two unitaries. Importantly, this distance induces the Haar measure on the group manifold. Eq.~\eqref{eq:Lipschitz1} demonstrates that $f(U)$ is a Lipschitz continuous function on the unitaries with Lipschitz constant $\eta=2$.

Calculating the expectation value of $f(U)$ follows a standard argument~\cite{MMuellerLec6,MMuellerLec8} and is the same as for the case of pure system-reservoir states. The expected trace distance is related to the expected Hilbert-Schmidt distance squared using Jensen's inequality
\begin{align}
\mathbb{E}_U \left\|\sigma_\s(U)-\frac{\mathbbm{1}}{d_\s}\right\|_1 \leq & \sqrt{d_\s} \mathbb{E}_U \left\|\sigma_\s(U)-\frac{\mathbbm{1}}{d_\s}\right\|_2 \nonumber \\
\leq & \sqrt{d_\s}  \sqrt{\mathbb{E}_U\left\|\sigma_\s(U)-\frac{\mathbbm{1}}{d_\s}\right\|_2^2}.\label{eq:expinequality}
\end{align}
The Hilbert-Schmidt distance can then be expanded in terms of the purity of $\sigma_\s$ as
\begin{gather}
\mathbb{E}_U\left\|\sigma_\s(U)-\frac{\mathbbm{1}}{d_\s}\right\|_2^2 = \mathbb{E}_U\tr{\sigma_\s^2}-\frac{1}{d_\s}.\label{eq:expinequality2}
\end{gather}
Lastly, the expectation value of the purity can be calculated for the Haar measure by utilising properties of the swap operator (though the calculation usually involves an average over pure states, it is the same for the unitary orbit of any state); it is
\begin{gather}
\mathbb{E}_U \tr{\sigma_\s^2} = \frac{d_\s+d_\e}{d_\s d_\e + 1},
\end{gather}
which, using Eqs.~\eqref{eq:expinequality}~\&~\eqref{eq:expinequality2}, leads to
\begin{align}
\mathbb{E}_U f \leq & \sqrt{d_\s}\sqrt{\frac{d_\s+d_\e}{d_\s d_\e + 1}-\frac{1}{d_\s}} \nonumber \\
\leq & \sqrt{d_\s}\sqrt{\frac{d_\s+d_\e}{d_\s d_\e}-\frac{1}{d_\s}} = \sqrt{\frac{d_\s}{d_\e}}.
\end{align}

Using Lemma~\ref{lem:Levy}, we have that
\begin{gather}
{\rm Prob}\left(f(U)\geq \epsilon + \sqrt{\frac{d_\s}{d_\e}}\right) \leq 2\alpha_\mathcal{U}(\epsilon/2), \label{eq:prelimineq}
\end{gather}
where $\alpha_\mathcal{U}(x)$ is the concentration function on the group manifold of ${\rm SU}(d_{\se})$ equipped with the Haar measure. We can relate this to the concentration function for the sphere, using a Theorem by Gromov~\cite{Gromov1979,minsurfaces}:
\begin{thm}[\emph{\textbf{Gromov}}]
Let $S^{n+1}(R)$ be the ($n+1$)-sphere of radius $R$, and let $M$ be a closed ($n+1$)-dimensional Riemannian manifold with ${\rm Ric}(M)\geq n/R^2 = {\rm Ric}(S^{n+1}(R))$, where ${\rm Ric}(X)$ is the infimum of diagonal elements of the Ricci curvature tensor on $X$. Choose $M_0\subset M$ to be a domain with smooth boundary and let $B$ be a round ball in $S^{n+1}(R)$ such that
\begin{gather}
\frac{{\rm Vol}(M_0)}{{\rm Vol}(M)}=\frac{{\rm Vol}(B)}{{\rm Vol}(S^{n+1}(R))}.
\end{gather}
It then follows that
\begin{gather}
\frac{{\rm Vol}(\partial M_0)}{{\rm Vol}(M)}\geq \frac{{\rm Vol}(\partial B)}{{\rm Vol}(S^{n+1}(R))}.\label{eq:isoperimetric}
\end{gather}
\label{thm:Gromov}
\end{thm}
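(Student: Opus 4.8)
The plan is to recognise this as the L\'evy--Gromov isoperimetric inequality and to prove it by comparison geometry: I will foliate $M$ by the normal exponential flow emanating from an isoperimetric-minimising boundary and control the growth of the induced $n$-dimensional volume element using the Ricci lower bound ${\rm Ric}(M) \ge n/R^2$ through a Riccati (Heintze--Karcher) comparison against the round sphere. Rather than argue for an arbitrary smooth domain, I first reduce to a minimiser: it suffices to prove \eqref{eq:isoperimetric} for the domain of least boundary volume among all domains of the prescribed volume fraction, since if the minimiser beats the spherical cap then so does every competitor. By the standard existence and interior regularity theory for isoperimetric regions, such a minimiser $M_0$ exists, its boundary $\Sigma = \partial M_0$ is smooth away from a singular set of Hausdorff codimension at least seven (carrying no $n$-dimensional measure, hence negligible in all volume integrals), and $\Sigma$ has \emph{constant} mean curvature $H_0$.

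The geometric core is the behaviour of the parallel hypersurfaces. Let $\rho$ be the signed distance to $\Sigma$ and let $J(y,t)$ denote the Jacobian of the normal exponential map $(y,t) \mapsto \exp_y(t\,\nu(y))$, valid up to the cut and focal loci. Along each normal geodesic $\gamma$ the area element evolves by $\partial_t \log J = H(t)$, where $H(t)$ is the mean curvature of the level set $\{\rho = t\}$, and the shape operator obeys the matrix Riccati equation $S'(t) + S(t)^2 + R_{\dot\gamma}(t) = 0$. Taking the trace, applying Cauchy--Schwarz in the form ${\rm tr}(S^2) \ge H^2/n$, and inserting ${\rm tr}(R_{\dot\gamma}) = {\rm Ric}(\dot\gamma,\dot\gamma) \ge n/R^2$ gives
\begin{gather}
H'(t) \le -\frac{H(t)^2}{n} - \frac{n}{R^2}.
\end{gather}
Comparing this differential inequality with the corresponding \emph{equality} satisfied by a geodesic sphere on $S^{n+1}(R)$, a Riccati comparison argument bounds $H(t)$ from above by its spherical model value, and hence bounds $J(y,t)$ by a model Jacobian $j_{H_0}(t)$ determined solely by $H_0$ and $R$.

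Finally I would assemble and compare. By the coarea formula the area of each parallel hypersurface is $\mathcal{A}(t) = \int_\Sigma J(y,t)\,dy$, and the volumes enclosed on either side of $\Sigma$ are recovered by integrating $\mathcal{A}$ in $t$; the constancy of $H_0$ makes every fibre share the same model bound $j_{H_0}$. These estimates relate ${\rm Vol}(M_0)$, ${\rm Vol}(M\setminus M_0)$ and ${\rm Vol}(\Sigma)$ exactly as the analogous integrals relate the volume, complementary volume and boundary area of a geodesic cap $B \subset S^{n+1}(R)$ with the same mean curvature $H_0$ -- on the sphere the Cauchy--Schwarz and Ricci steps hold with equality, so caps saturate the bound. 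Matching the volume fractions ${\rm Vol}(M_0)/{\rm Vol}(M) = {\rm Vol}(B)/{\rm Vol}(S^{n+1}(R))$ and using the monotone dependence of perimeter on enclosed volume in the model then forces ${\rm Vol}(\partial M_0)/{\rm Vol}(M) \ge {\rm Vol}(\partial B)/{\rm Vol}(S^{n+1}(R))$, which is \eqref{eq:isoperimetric}.

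The hardest part will be the analytic bookkeeping hidden in the reduction and the foliation: justifying existence and interior regularity of the minimiser and confirming that its singular set is negligible, and making the parallel-hypersurface integrals rigorous up to the cut and focal loci so that the Jacobian comparison can be integrated over genuine fibres without over- or under-counting. Equally delicate is verifying that the spherical cap is the true equality case, and that matching volume fractions really does translate, through the monotonicity of the model isoperimetric profile, into the stated perimeter inequality.
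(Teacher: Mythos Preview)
Your proposal sketches the classical proof of the L\'evy--Gromov isoperimetric inequality via Heintze--Karcher comparison, and the outline is essentially correct: reduce to an isoperimetric minimiser with constant mean curvature boundary, run the Riccati comparison on the mean curvature of parallel hypersurfaces using the Ricci lower bound, and integrate via coarea to compare with the spherical model. The caveats you flag (regularity of the minimiser, handling the cut/focal loci, monotonicity of the model profile) are exactly the places where the argument needs care, and they can all be made rigorous along the lines you indicate.

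However, you should be aware that the paper does not prove this theorem at all. It is stated as a named result attributed to Gromov, with citations to the original literature, and is used as a black box in the proof of Lemma~\ref{lem:levymixed}: the paper invokes it only to transfer the concentration function from the group manifold of $\mathrm{SU}(d_{\se})$ to a comparison sphere of the appropriate radius. So there is no ``paper's own proof'' to compare against---your sketch supplies a proof where the paper simply cites one. If your aim is to match the paper's treatment, a citation suffices; if your aim is to actually establish the inequality, your plan is the right one.
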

That is, if the Ricci curvature is everywhere greater than that of some sphere, then the rate at which volume is enclosed as one moves away from the boundary of a region is at least as great as the corresponding rate for a similar region on the sphere. Thus, if the inequality in Eq.~\eqref{eq:isoperimetric} holds, the corresponding concentration functions are related as $\alpha_M(x)\leq\alpha_{S^{n+1}(R)}(x)$. 

Since the group manifold of ${\rm SU}(d_{\se})$ is compact and simply connected, it has constant, positive Ricci curvature (with respect to the Hilbert-Schmidt distance)~\cite{Milnor1976}. This can be calculated to be ${\rm Ric}(\mathcal{U})=d_{\se}/2$ (see, \emph{e.g.}, Chapter~18 of Ref.~\cite{GallierBook}). The manifold dimension is $d_{\se}^2-1$, so we must compare it with $S^{d_{\se}^2-1}(R)$, finding that the radius must be at least $R_0=\sqrt{2(d_{\se}^2-2)/d_{\se}}$ in order for Theorem~\ref{thm:Gromov} to apply. Choosing the minimal case, we use the Theorem to upper bound $\alpha_\mathcal{U}(x)$ by $\alpha_{S^{d_{\se}^2-1}(R_0)}(x) = \exp(-d_{\se} x^2/4))$~\cite{LedouxBook}. Combining this with Eq.~\eqref{eq:prelimineq} leads to the desired result in Eq.~\eqref{LevyBound}. 

The same argument follows for a function $g(U)=\|\sigma_\e(U)-\mathbbm{1}/d_\e\|_1$; therefore the inequality holds under exchange of system and reservoir labels.
\end{proof}

\section{Bound Comparison}
\label{app:boundcomparison}

Although no discernible hierarchy between $\omega$ and $\gamma$ exists, here we compare the tightness of these bounds to $\beta \braket{\q}$. In Fig.~\ref{BoundComparison}, we sample 5000 Haar-random interactions and initial states to analyse general behaviours in regions of the parameter space. The average heat dissipated in these cases is non-negative, so $\gamma - \omega > 0$ implies that $\gamma$ is a tighter bound.

The only distribution of $\gamma - \omega$ that has a significant proportion of negative data points occurs for small-scale interactions where the process occurs at low temperature (see Fig.~\ref{BoundComparison}~(a)). This shows that $\omega$ can provide a tighter bound to the average heat than $\gamma$ in this regime ($\omega$ outperforms $\gamma$ for $\sim35\%$ of such interactions). However, regardless of the temperature, when either dimension is much larger than the other (or both are large), $\gamma$ almost always provides a tighter bound to the average heat (see Fig.~\ref{BoundComparison}~(b) -- (h)). It is interesting to note that as the dimension of the system increases, $\gamma - \omega$ tends to peak around a specific value (see Fig.~\ref{BoundComparison}~(c), (d), (g) \& (h)).

The tightness of $\gamma$ with respect to the average heat itself can be understood from the distribution of $\beta \braket{\q} - \gamma$. Independent of subsystem dimensions, $\gamma$ provides a tight bound when the process occurs at high temperatures (see Fig.~\ref{BoundComparison}~(b), (d), (f) \& (g)), but a rather poor bound for interactions at low temperatures (where all other known bounds also perform poorly). Note that as the dimension of the reservoir increases, $\beta \braket{\q} - \gamma$ tends to 0 and so the GPM bound is tight.

\begin{figure*}[!h]
\centering
\subfigure{
\includegraphics[width=0.45\linewidth]
{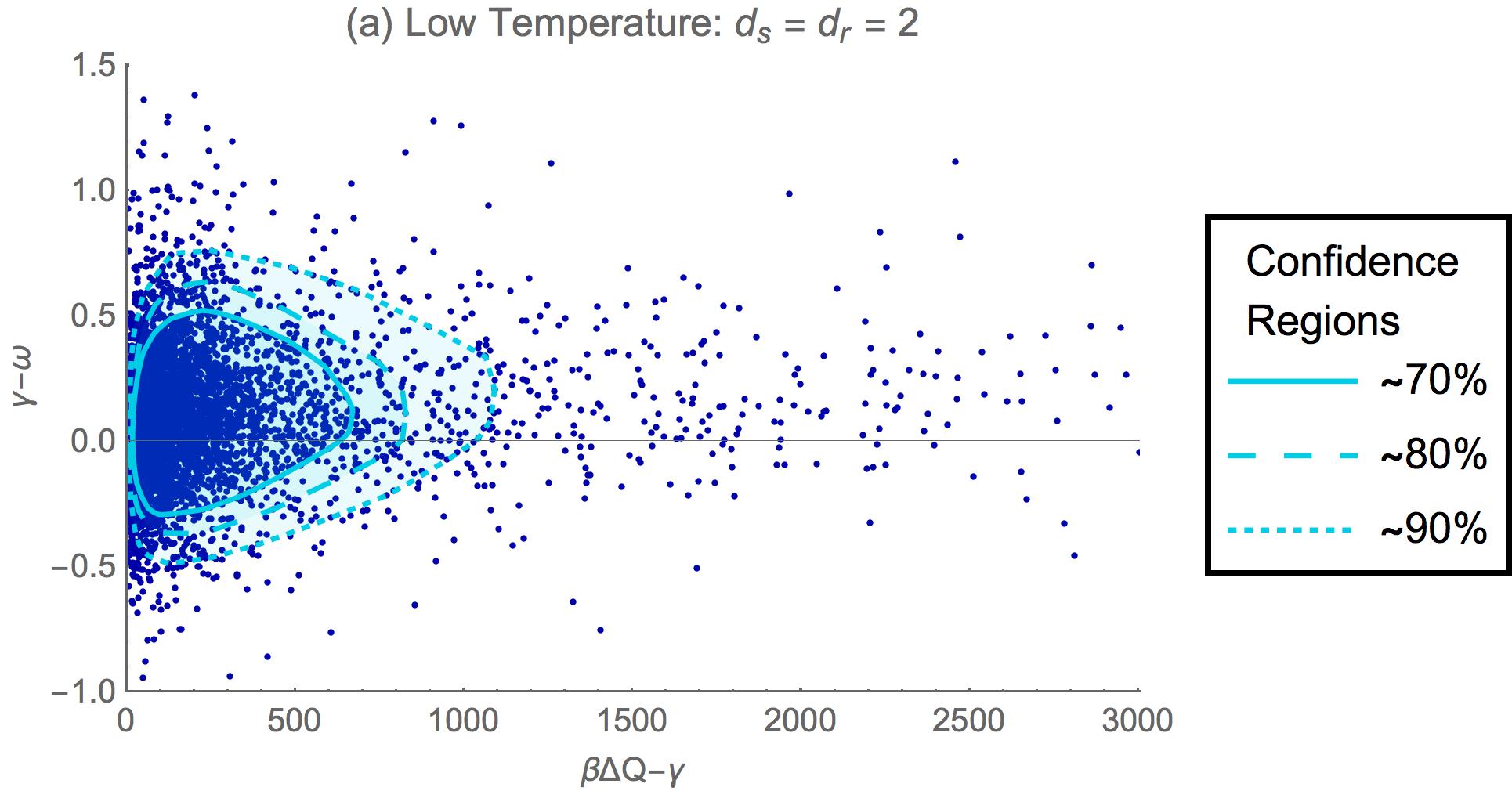}}
\subfigure{
\includegraphics[width=0.45\linewidth]
{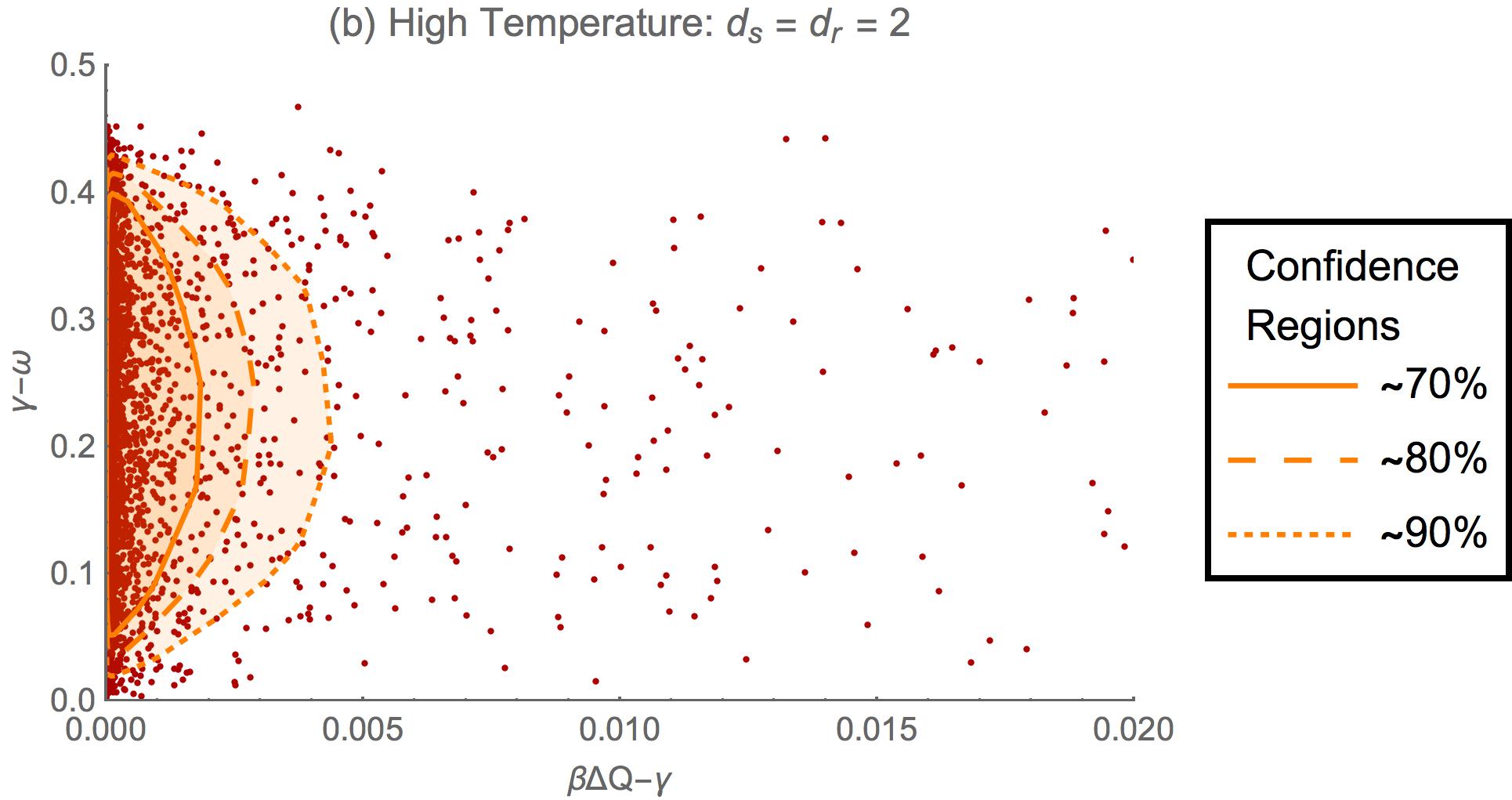}}
\subfigure{\includegraphics[width=0.45\linewidth]{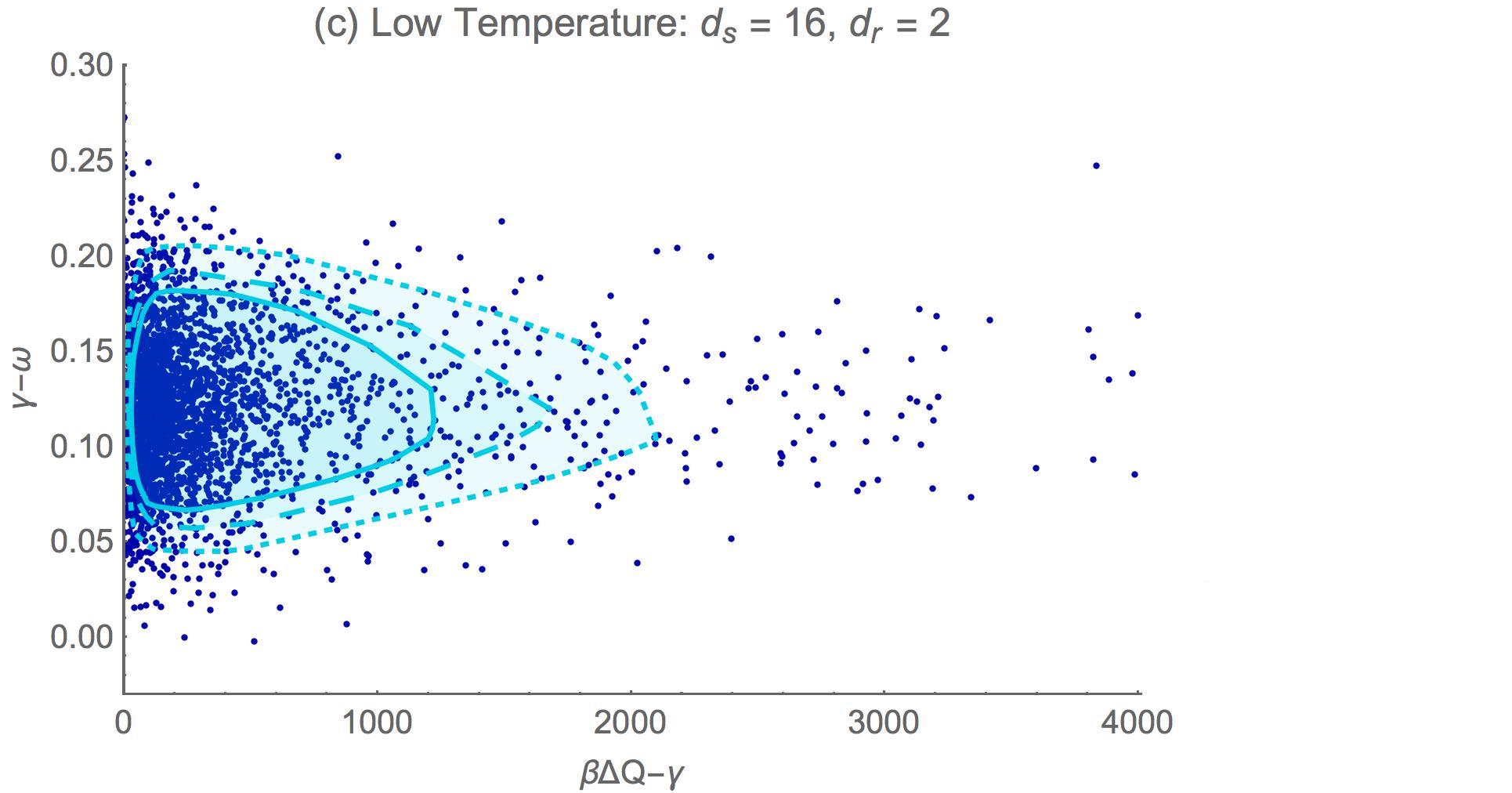}}
\subfigure{
\includegraphics[width=0.45\linewidth]
{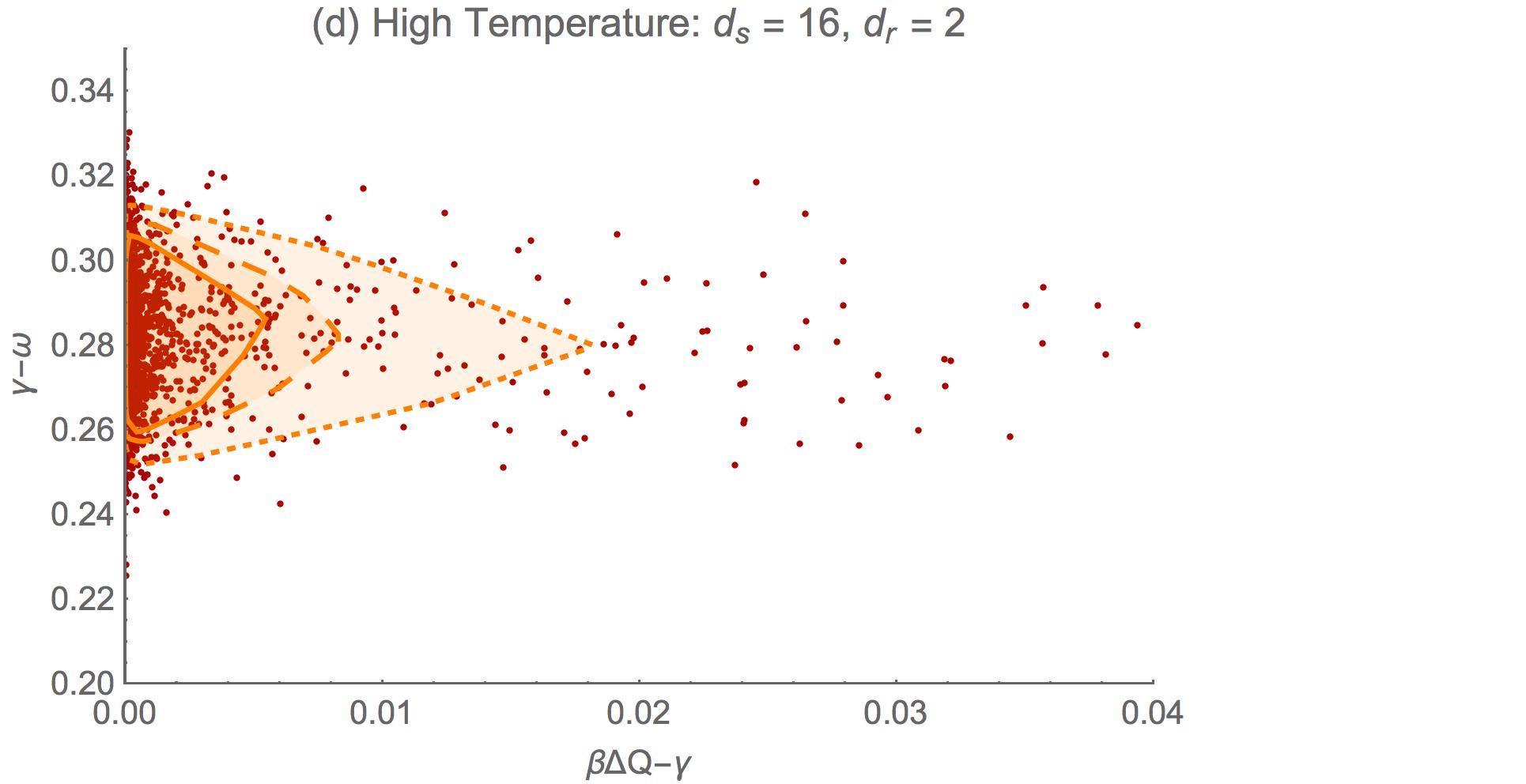}}
\subfigure{
\includegraphics[width=0.45\linewidth]
{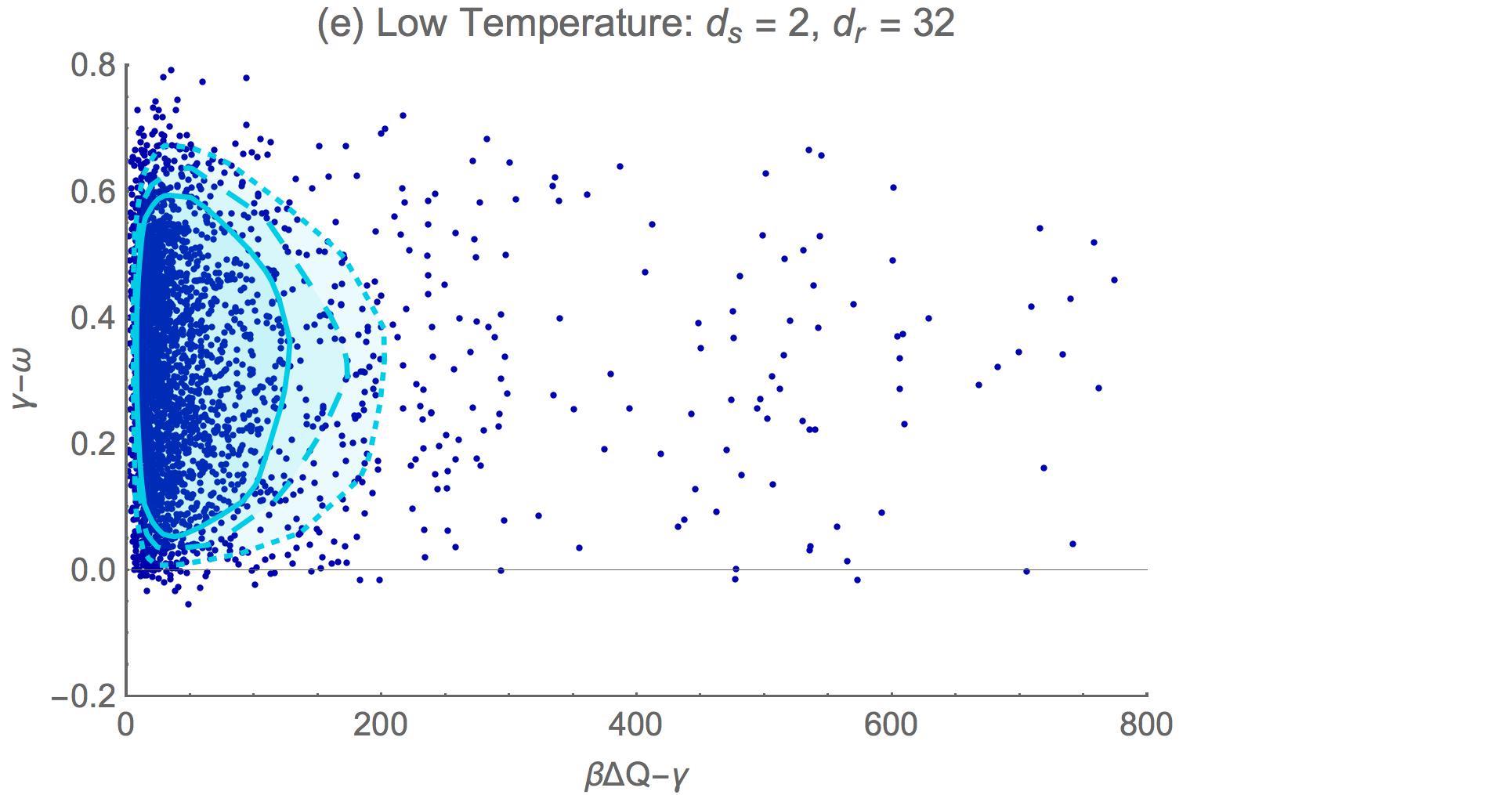}}
\subfigure{
\includegraphics[width=0.45\linewidth]
{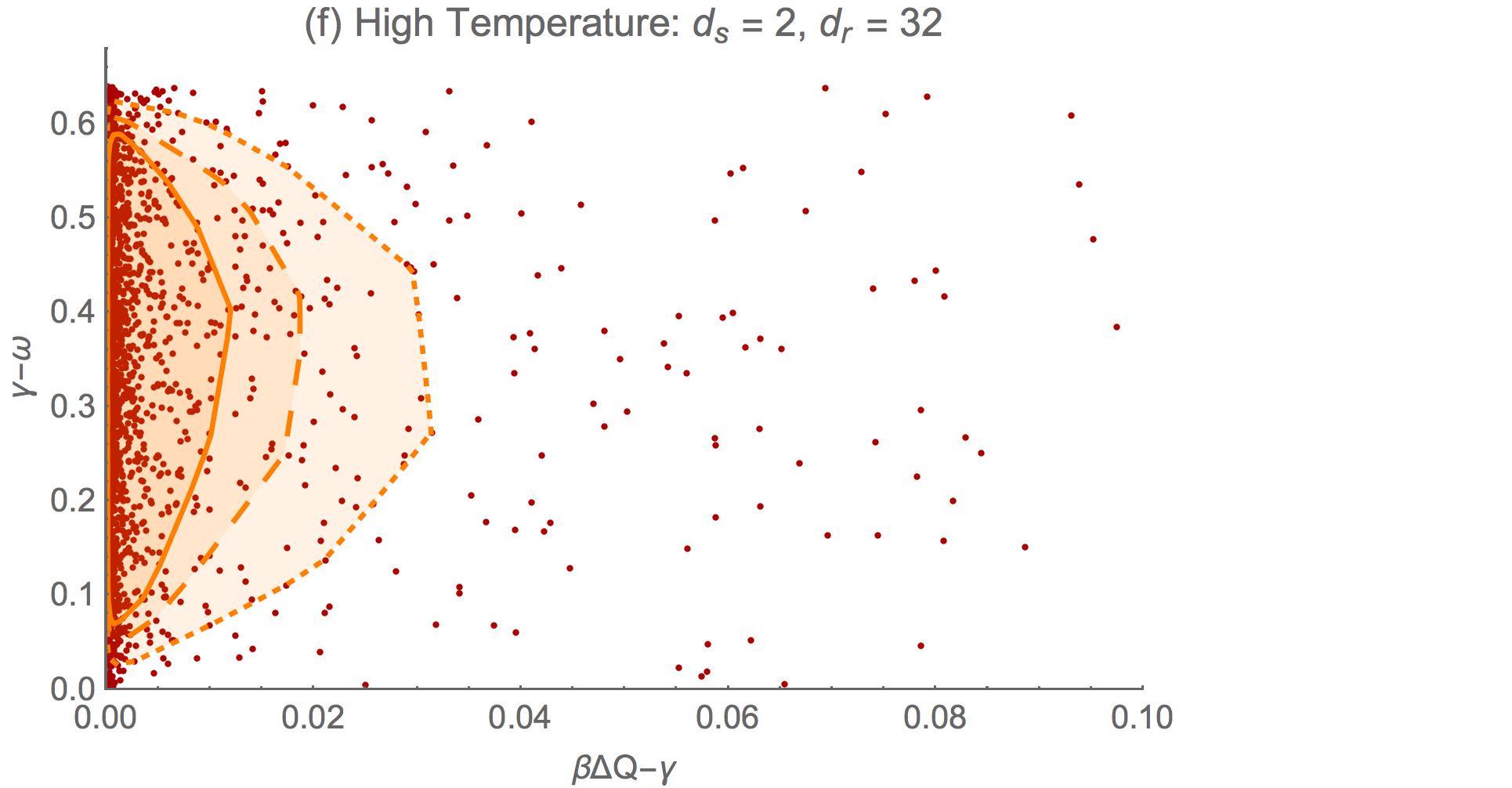}}
\subfigure{
\includegraphics[width=0.45\linewidth]
{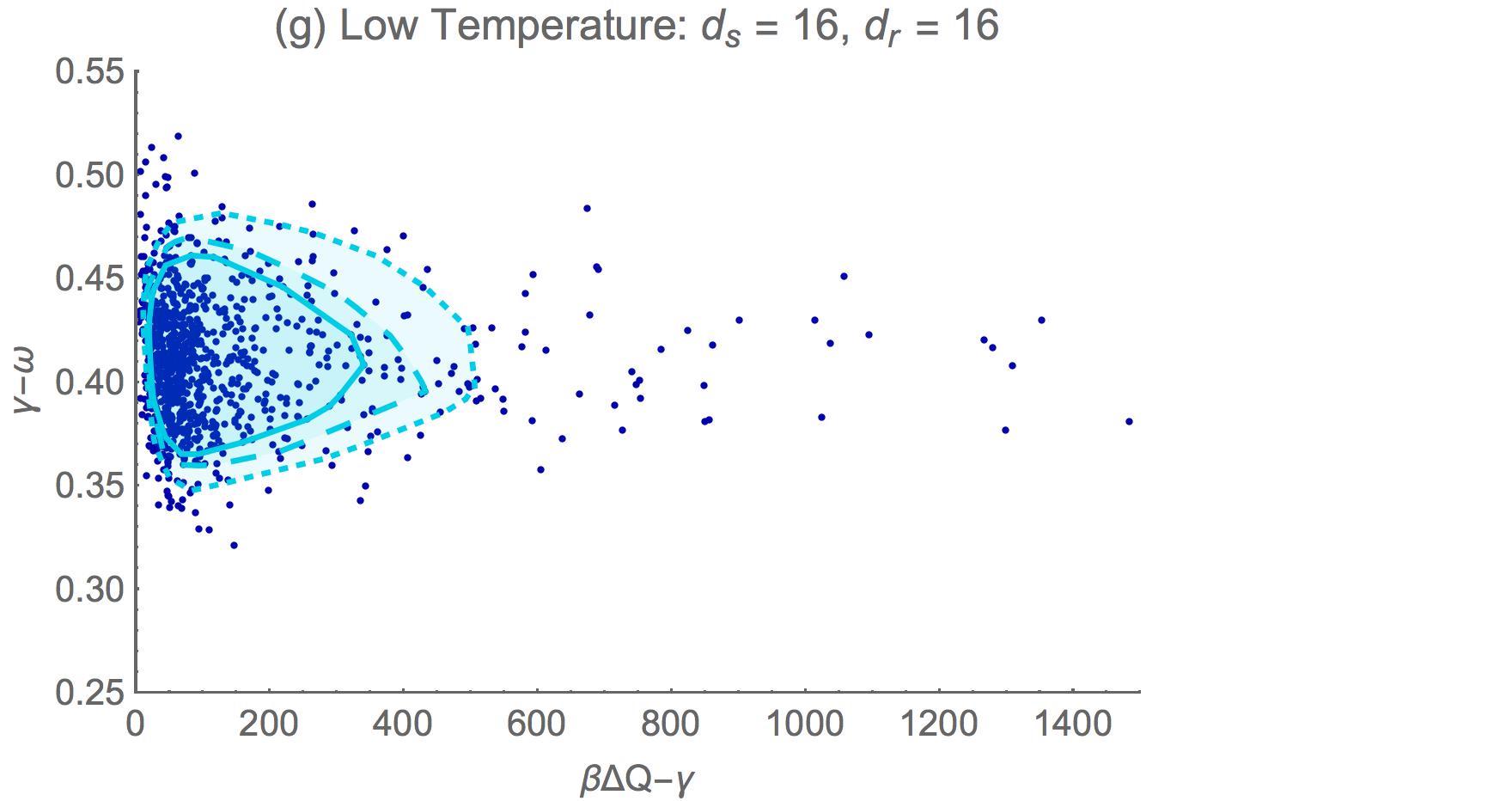}}
\subfigure{
\includegraphics[width=0.45\linewidth]
{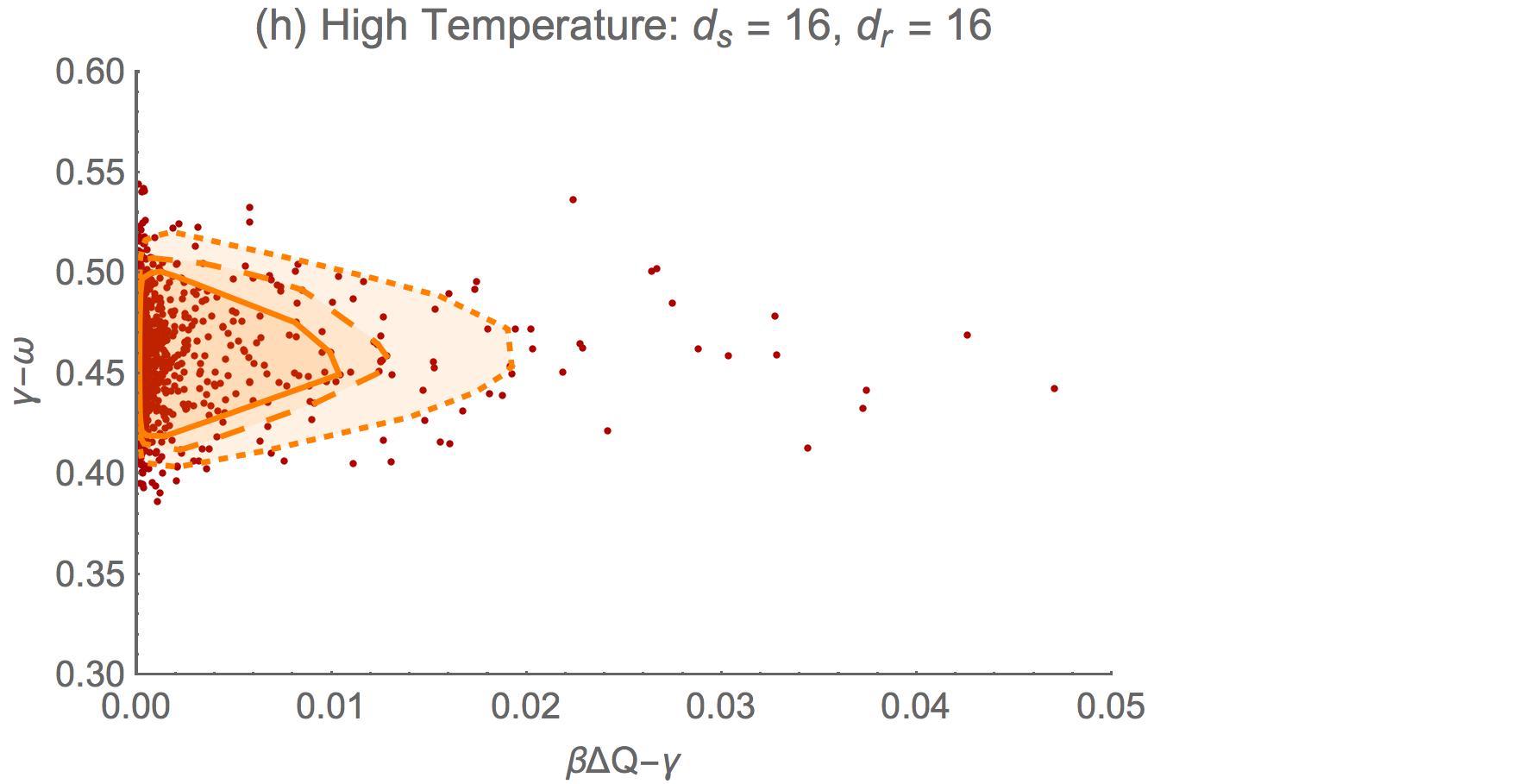}}
\caption{(Color online). We sample 5000 Haar-random interactions and compute $\beta \braket{\q}$, $\omega$ and $\gamma$ for a variety of dimensions and both low and high temperature regimes. Confidence polytopes containing $\sim70\%$, 80\% and 90\% of the data (solid, dashed and dotted lines respectively) are calculated to highlight general behaviour by peeling off convex hull layers centered on the bivariate median. The only region where $\omega$ significantly outperforms $\gamma$ in tightness to $\beta \braket{\q}$ is the small dimension, low temperature regime (see panel (a), where $\sim35\%$ of interactions generate heat that is closer to $\omega$ than to $\gamma$). In all remaining cases, where fluctuation relation quickly emerges, $\gamma$ almost always provides a tighter bound to the average heat. Furthermore, we see that for any interaction at high temperature, $\gamma$ is a tight bound on $\beta \braket{\q}$. For interactions occurring at low temperatures, $\gamma$ is not a particularly tight bound on the average heat regardless of the dimension (although neither is any previously known bound).} \label{BoundComparison}
\end{figure*}

\clearpage

\bibliography{thermo-fluctuations.bib}

\begin{thebibliography}{68}%
\makeatletter
\providecommand \@ifxundefined [1]{%
 \@ifx{#1\undefined}
}%
\providecommand \@ifnum [1]{%
 \ifnum #1\expandafter \@firstoftwo
 \else \expandafter \@secondoftwo
 \fi
}%
\providecommand \@ifx [1]{%
 \ifx #1\expandafter \@firstoftwo
 \else \expandafter \@secondoftwo
 \fi
}%
\providecommand \natexlab [1]{#1}%
\providecommand \enquote  [1]{``#1''}%
\providecommand \bibnamefont  [1]{#1}%
\providecommand \bibfnamefont [1]{#1}%
\providecommand \citenamefont [1]{#1}%
\providecommand \href@noop [0]{\@secondoftwo}%
\providecommand \href [0]{\begingroup \@sanitize@url \@href}%
\providecommand \@href[1]{\@@startlink{#1}\@@href}%
\providecommand \@@href[1]{\endgroup#1\@@endlink}%
\providecommand \@sanitize@url [0]{\catcode `\\12\catcode `\$12\catcode
  `\&12\catcode `\#12\catcode `\^12\catcode `\_12\catcode `\%12\relax}%
\providecommand \@@startlink[1]{}%
\providecommand \@@endlink[0]{}%
\providecommand \url  [0]{\begingroup\@sanitize@url \@url }%
\providecommand \@url [1]{\endgroup\@href {#1}{\urlprefix }}%
\providecommand \urlprefix  [0]{URL }%
\providecommand \Eprint [0]{\href }%
\providecommand \doibase [0]{http://dx.doi.org/}%
\providecommand \selectlanguage [0]{\@gobble}%
\providecommand \bibinfo  [0]{\@secondoftwo}%
\providecommand \bibfield  [0]{\@secondoftwo}%
\providecommand \translation [1]{[#1]}%
\providecommand \BibitemOpen [0]{}%
\providecommand \bibitemStop [0]{}%
\providecommand \bibitemNoStop [0]{.\EOS\space}%
\providecommand \EOS [0]{\spacefactor3000\relax}%
\providecommand \BibitemShut  [1]{\csname bibitem#1\endcsname}%
\let\auto@bib@innerbib\@empty
\bibitem [{\citenamefont {Landauer}(1961)}]{Landauer1961}%
  \BibitemOpen
  \bibfield  {author} {\bibinfo {author} {\bibfnamefont {R.}~\bibnamefont
  {Landauer}},\ }\href@noop {} {\bibfield  {journal} {\bibinfo  {journal} {IBM
  J. Res. \& Dev.}\ }\textbf {\bibinfo {volume} {5}},\ \bibinfo {pages} {183}
  (\bibinfo {year} {1961})}\BibitemShut {NoStop}%
\bibitem [{\citenamefont {Bennett}(1982)}]{Bennett1982}%
  \BibitemOpen
  \bibfield  {author} {\bibinfo {author} {\bibfnamefont {C.~H.}\ \bibnamefont
  {Bennett}},\ }\href {\doibase 10.1007/BF02084158} {\bibfield  {journal}
  {\bibinfo  {journal} {Int. J. Theor. Phys.}\ }\textbf {\bibinfo {volume}
  {21}},\ \bibinfo {pages} {905} (\bibinfo {year} {1982})}\BibitemShut
  {NoStop}%
\bibitem [{\citenamefont {Shizume}(1995)}]{Shizume1995}%
  \BibitemOpen
  \bibfield  {author} {\bibinfo {author} {\bibfnamefont {K.}~\bibnamefont
  {Shizume}},\ }\href {\doibase 10.1103/PhysRevE.52.3495} {\bibfield  {journal}
  {\bibinfo  {journal} {Phys. Rev. E}\ }\textbf {\bibinfo {volume} {52}},\
  \bibinfo {pages} {3495} (\bibinfo {year} {1995})}\BibitemShut {NoStop}%
\bibitem [{\citenamefont {Piechocinska}(2000)}]{Piechocinska2000}%
  \BibitemOpen
  \bibfield  {author} {\bibinfo {author} {\bibfnamefont {B.}~\bibnamefont
  {Piechocinska}},\ }\href {\doibase 10.1103/PhysRevA.61.062314} {\bibfield
  {journal} {\bibinfo  {journal} {Phys. Rev. A}\ }\textbf {\bibinfo {volume}
  {61}},\ \bibinfo {pages} {062314} (\bibinfo {year} {2000})}\BibitemShut
  {NoStop}%
\bibitem [{\citenamefont {Dillenschneider}\ and\ \citenamefont
  {Lutz}(2009)}]{Lutz2009}%
  \BibitemOpen
  \bibfield  {author} {\bibinfo {author} {\bibfnamefont {R.}~\bibnamefont
  {Dillenschneider}}\ and\ \bibinfo {author} {\bibfnamefont {E.}~\bibnamefont
  {Lutz}},\ }\href {\doibase 10.1103/PhysRevLett.102.210601} {\bibfield
  {journal} {\bibinfo  {journal} {Phys. Rev. Lett.}\ }\textbf {\bibinfo
  {volume} {102}},\ \bibinfo {pages} {210601} (\bibinfo {year}
  {2009})}\BibitemShut {NoStop}%
\bibitem [{\citenamefont {Sagawa}\ and\ \citenamefont
  {Ueda}(2009)}]{Sagawa2009}%
  \BibitemOpen
  \bibfield  {author} {\bibinfo {author} {\bibfnamefont {T.}~\bibnamefont
  {Sagawa}}\ and\ \bibinfo {author} {\bibfnamefont {M.}~\bibnamefont {Ueda}},\
  }\href {\doibase 10.1103/PhysRevLett.102.250602} {\bibfield  {journal}
  {\bibinfo  {journal} {Phys. Rev. Lett.}\ }\textbf {\bibinfo {volume} {102}},\
  \bibinfo {pages} {250602} (\bibinfo {year} {2009})}\BibitemShut {NoStop}%
\bibitem [{\citenamefont {Plenio}(1999)}]{Plenio1999}%
  \BibitemOpen
  \bibfield  {author} {\bibinfo {author} {\bibfnamefont {M.~B.}\ \bibnamefont
  {Plenio}},\ }\href {\doibase 10.1016/S0375-9601(99)00737-9} {\bibfield
  {journal} {\bibinfo  {journal} {Phys. Lett. A}\ }\textbf {\bibinfo {volume}
  {263}},\ \bibinfo {pages} {281} (\bibinfo {year} {1999})}\BibitemShut
  {NoStop}%
\bibitem [{\citenamefont {Vedral}(2000)}]{Vedral2000}%
  \BibitemOpen
  \bibfield  {author} {\bibinfo {author} {\bibfnamefont {V.}~\bibnamefont
  {Vedral}},\ }\href {\doibase 10.1098/rspa.2000.0545} {\bibfield  {journal}
  {\bibinfo  {journal} {Proc. R. Soc. Lond. A}\ }\textbf {\bibinfo {volume}
  {456}},\ \bibinfo {pages} {969} (\bibinfo {year} {2000})}\BibitemShut
  {NoStop}%
\bibitem [{\citenamefont {Janzing}\ \emph {et~al.}(2000)\citenamefont
  {Janzing}, \citenamefont {Wocjan}, \citenamefont {Zeier}, \citenamefont
  {Geiss},\ and\ \citenamefont {Beth}}]{Janzig2000}%
  \BibitemOpen
  \bibfield  {author} {\bibinfo {author} {\bibfnamefont {D.}~\bibnamefont
  {Janzing}}, \bibinfo {author} {\bibfnamefont {P.}~\bibnamefont {Wocjan}},
  \bibinfo {author} {\bibfnamefont {R.}~\bibnamefont {Zeier}}, \bibinfo
  {author} {\bibfnamefont {R.}~\bibnamefont {Geiss}}, \ and\ \bibinfo {author}
  {\bibfnamefont {T.}~\bibnamefont {Beth}},\ }\href {\doibase
  10.1023/A:1026422630734} {\bibfield  {journal} {\bibinfo  {journal} {Int. J.
  Theor. Phys.}\ }\textbf {\bibinfo {volume} {39}},\ \bibinfo {pages} {2217}
  (\bibinfo {year} {2000})}\BibitemShut {NoStop}%
\bibitem [{\citenamefont {Vaccaro}\ and\ \citenamefont
  {Barnett}(2011)}]{Vaccaro2009}%
  \BibitemOpen
  \bibfield  {author} {\bibinfo {author} {\bibfnamefont {J.~A.}\ \bibnamefont
  {Vaccaro}}\ and\ \bibinfo {author} {\bibfnamefont {S.~M.}\ \bibnamefont
  {Barnett}},\ }\href {\doibase 10.1098/rspa.2010.0577} {\bibfield  {journal}
  {\bibinfo  {journal} {Proc. R. Soc. A}\ }\textbf {\bibinfo {volume} {467}},\
  \bibinfo {pages} {2130} (\bibinfo {year} {2011})}\BibitemShut {NoStop}%
\bibitem [{\citenamefont {Hilt}\ \emph {et~al.}(2011)\citenamefont {Hilt},
  \citenamefont {Shabbir}, \citenamefont {Anders},\ and\ \citenamefont
  {Lutz}}]{Hilt2011}%
  \BibitemOpen
  \bibfield  {author} {\bibinfo {author} {\bibfnamefont {S.}~\bibnamefont
  {Hilt}}, \bibinfo {author} {\bibfnamefont {S.}~\bibnamefont {Shabbir}},
  \bibinfo {author} {\bibfnamefont {J.}~\bibnamefont {Anders}}, \ and\ \bibinfo
  {author} {\bibfnamefont {E.}~\bibnamefont {Lutz}},\ }\href {\doibase
  10.1103/PhysRevE.83.030102} {\bibfield  {journal} {\bibinfo  {journal} {Phys.
  Rev. E}\ }\textbf {\bibinfo {volume} {83}},\ \bibinfo {pages} {030102}
  (\bibinfo {year} {2011})}\BibitemShut {NoStop}%
\bibitem [{\citenamefont {Esposito}\ and\ \citenamefont {Van~den
  Broeck}(2011)}]{Esposito2011}%
  \BibitemOpen
  \bibfield  {author} {\bibinfo {author} {\bibfnamefont {M.}~\bibnamefont
  {Esposito}}\ and\ \bibinfo {author} {\bibfnamefont {C.}~\bibnamefont {Van~den
  Broeck}},\ }\href {\doibase 10.1209/0295-5075/95/40004} {\bibfield  {journal}
  {\bibinfo  {journal} {EPL}\ }\textbf {\bibinfo {volume} {95}},\ \bibinfo
  {pages} {40004} (\bibinfo {year} {2011})}\BibitemShut {NoStop}%
\bibitem [{\citenamefont {Barnett}\ and\ \citenamefont
  {Vaccaro}(2013)}]{Barnett2013}%
  \BibitemOpen
  \bibfield  {author} {\bibinfo {author} {\bibfnamefont {S.~M.}\ \bibnamefont
  {Barnett}}\ and\ \bibinfo {author} {\bibfnamefont {J.~A.}\ \bibnamefont
  {Vaccaro}},\ }\href {\doibase 10.3390/e15114956} {\bibfield  {journal}
  {\bibinfo  {journal} {Entropy}\ }\textbf {\bibinfo {volume} {15}},\ \bibinfo
  {pages} {4956} (\bibinfo {year} {2013})}\BibitemShut {NoStop}%
\bibitem [{\citenamefont {B{\'e}rut}\ \emph {et~al.}(2012)\citenamefont
  {B{\'e}rut}, \citenamefont {Arakelyan}, \citenamefont {Petrosyan},
  \citenamefont {Ciliberto}, \citenamefont {Dillenschneider},\ and\
  \citenamefont {Lutz}}]{Berut2012}%
  \BibitemOpen
  \bibfield  {author} {\bibinfo {author} {\bibfnamefont {A.}~\bibnamefont
  {B{\'e}rut}}, \bibinfo {author} {\bibfnamefont {A.}~\bibnamefont
  {Arakelyan}}, \bibinfo {author} {\bibfnamefont {A.}~\bibnamefont
  {Petrosyan}}, \bibinfo {author} {\bibfnamefont {S.}~\bibnamefont
  {Ciliberto}}, \bibinfo {author} {\bibfnamefont {R.}~\bibnamefont
  {Dillenschneider}}, \ and\ \bibinfo {author} {\bibfnamefont {E.}~\bibnamefont
  {Lutz}},\ }\href {\doibase 10.1038/nature10872} {\bibfield  {journal}
  {\bibinfo  {journal} {Nature}\ }\textbf {\bibinfo {volume} {483}},\ \bibinfo
  {pages} {187} (\bibinfo {year} {2012})}\BibitemShut {NoStop}%
\bibitem [{\citenamefont {Orlov}\ \emph {et~al.}(2012)\citenamefont {Orlov},
  \citenamefont {Lent}, \citenamefont {Thorpe}, \citenamefont {Boechler},\ and\
  \citenamefont {Snider}}]{Orlov2012}%
  \BibitemOpen
  \bibfield  {author} {\bibinfo {author} {\bibfnamefont {A.~O.}\ \bibnamefont
  {Orlov}}, \bibinfo {author} {\bibfnamefont {C.~S.}\ \bibnamefont {Lent}},
  \bibinfo {author} {\bibfnamefont {C.~C.}\ \bibnamefont {Thorpe}}, \bibinfo
  {author} {\bibfnamefont {G.~P.}\ \bibnamefont {Boechler}}, \ and\ \bibinfo
  {author} {\bibfnamefont {G.~L.}\ \bibnamefont {Snider}},\ }\href
  {10.1143/JJAP.51.06FE10} {\bibfield  {journal} {\bibinfo  {journal} {Jpn. J.
  Appl. Phys.}\ }\textbf {\bibinfo {volume} {51}} (\bibinfo {year}
  {2012})}\BibitemShut {NoStop}%
\bibitem [{\citenamefont {Jun}\ \emph {et~al.}(2014)\citenamefont {Jun},
  \citenamefont {Gavrilov},\ and\ \citenamefont {Bechhoefer}}]{Jun2014}%
  \BibitemOpen
  \bibfield  {author} {\bibinfo {author} {\bibfnamefont {Y.}~\bibnamefont
  {Jun}}, \bibinfo {author} {\bibfnamefont {M.}~\bibnamefont {Gavrilov}}, \
  and\ \bibinfo {author} {\bibfnamefont {J.}~\bibnamefont {Bechhoefer}},\
  }\href {\doibase 10.1103/PhysRevLett.113.190601} {\bibfield  {journal}
  {\bibinfo  {journal} {Phys. Rev. Lett.}\ }\textbf {\bibinfo {volume} {113}},\
  \bibinfo {pages} {190601} (\bibinfo {year} {2014})}\BibitemShut {NoStop}%
\bibitem [{\citenamefont {Peterson}\ \emph {et~al.}(2016)\citenamefont
  {Peterson}, \citenamefont {Sarthour}, \citenamefont {Souza}, \citenamefont
  {Oliveira}, \citenamefont {Goold}, \citenamefont {Modi}, \citenamefont
  {Soares-Pinto},\ and\ \citenamefont {C{\'e}leri}}]{Silva2014}%
  \BibitemOpen
  \bibfield  {author} {\bibinfo {author} {\bibfnamefont {J.~P.~S.}\
  \bibnamefont {Peterson}}, \bibinfo {author} {\bibfnamefont {R.~S.}\
  \bibnamefont {Sarthour}}, \bibinfo {author} {\bibfnamefont {A.~M.}\
  \bibnamefont {Souza}}, \bibinfo {author} {\bibfnamefont {I.~S.}\ \bibnamefont
  {Oliveira}}, \bibinfo {author} {\bibfnamefont {J.}~\bibnamefont {Goold}},
  \bibinfo {author} {\bibfnamefont {K.}~\bibnamefont {Modi}}, \bibinfo {author}
  {\bibfnamefont {D.~O.}\ \bibnamefont {Soares-Pinto}}, \ and\ \bibinfo
  {author} {\bibfnamefont {L.~C.}\ \bibnamefont {C{\'e}leri}},\ }\href
  {\doibase 10.1098/rspa.2015.0813} {\bibfield  {journal} {\bibinfo  {journal}
  {Proc. R. Soc. A}\ }\textbf {\bibinfo {volume} {472}},\ \bibinfo {pages}
  {2188} (\bibinfo {year} {2016})}\BibitemShut {NoStop}%
\bibitem [{\citenamefont {Reeb}\ and\ \citenamefont {Wolf}(2014)}]{Reeb2014}%
  \BibitemOpen
  \bibfield  {author} {\bibinfo {author} {\bibfnamefont {D.}~\bibnamefont
  {Reeb}}\ and\ \bibinfo {author} {\bibfnamefont {M.~M.}\ \bibnamefont
  {Wolf}},\ }\href {\doibase 10.1088/1367-2630/16/10/103011} {\bibfield
  {journal} {\bibinfo  {journal} {New J. Phys.}\ }\textbf {\bibinfo {volume}
  {16}},\ \bibinfo {pages} {1} (\bibinfo {year} {2014})}\BibitemShut {NoStop}%
\bibitem [{\citenamefont {Mohammady}\ \emph {et~al.}(2016)\citenamefont
  {Mohammady}, \citenamefont {Mohseni},\ and\ \citenamefont
  {Omar}}]{Mohammady2016}%
  \BibitemOpen
  \bibfield  {author} {\bibinfo {author} {\bibfnamefont {M.~H.}\ \bibnamefont
  {Mohammady}}, \bibinfo {author} {\bibfnamefont {M.}~\bibnamefont {Mohseni}},
  \ and\ \bibinfo {author} {\bibfnamefont {Y.}~\bibnamefont {Omar}},\ }\href
  {\doibase 10.1088/1367-2630/18/1/015011} {\bibfield  {journal} {\bibinfo
  {journal} {New J. Phys.}\ }\textbf {\bibinfo {volume} {18}},\ \bibinfo
  {pages} {015011} (\bibinfo {year} {2016})}\BibitemShut {NoStop}%
\bibitem [{\citenamefont {Esposito}\ \emph {et~al.}(2009)\citenamefont
  {Esposito}, \citenamefont {Harbola},\ and\ \citenamefont
  {Mukamel}}]{Esposito2009}%
  \BibitemOpen
  \bibfield  {author} {\bibinfo {author} {\bibfnamefont {M.}~\bibnamefont
  {Esposito}}, \bibinfo {author} {\bibfnamefont {U.}~\bibnamefont {Harbola}}, \
  and\ \bibinfo {author} {\bibfnamefont {S.}~\bibnamefont {Mukamel}},\ }\href
  {\doibase 10.1103/RevModPhys.81.1665} {\bibfield  {journal} {\bibinfo
  {journal} {Rev. Mod. Phys.}\ }\textbf {\bibinfo {volume} {81}},\ \bibinfo
  {pages} {1665} (\bibinfo {year} {2009})}\BibitemShut {NoStop}%
\bibitem [{\citenamefont {Brand\~ao}\ \emph {et~al.}(2013)\citenamefont
  {Brand\~ao}, \citenamefont {Horodecki}, \citenamefont {Oppenheim},
  \citenamefont {Renes},\ and\ \citenamefont {Spekkens}}]{Brandao2013}%
  \BibitemOpen
  \bibfield  {author} {\bibinfo {author} {\bibfnamefont {F.~G. S.~L.}\
  \bibnamefont {Brand\~ao}}, \bibinfo {author} {\bibfnamefont {M.}~\bibnamefont
  {Horodecki}}, \bibinfo {author} {\bibfnamefont {J.}~\bibnamefont
  {Oppenheim}}, \bibinfo {author} {\bibfnamefont {J.~M.}\ \bibnamefont
  {Renes}}, \ and\ \bibinfo {author} {\bibfnamefont {R.~W.}\ \bibnamefont
  {Spekkens}},\ }\href {\doibase 10.1103/PhysRevLett.111.250404} {\bibfield
  {journal} {\bibinfo  {journal} {Phys. Rev. Lett.}\ }\textbf {\bibinfo
  {volume} {111}},\ \bibinfo {pages} {250404} (\bibinfo {year}
  {2013})}\BibitemShut {NoStop}%
\bibitem [{\citenamefont {Tasaki}(2016)}]{Tasaki2016}%
  \BibitemOpen
  \bibfield  {author} {\bibinfo {author} {\bibfnamefont {H.}~\bibnamefont
  {Tasaki}},\ }\href {\doibase 10.1103/PhysRevLett.116.170402} {\bibfield
  {journal} {\bibinfo  {journal} {Phys. Rev. Lett.}\ }\textbf {\bibinfo
  {volume} {116}},\ \bibinfo {pages} {170402} (\bibinfo {year}
  {2016})}\BibitemShut {NoStop}%
\bibitem [{\citenamefont {Goold}\ \emph {et~al.}(2015)\citenamefont {Goold},
  \citenamefont {Paternostro},\ and\ \citenamefont {Modi}}]{Goold2015}%
  \BibitemOpen
  \bibfield  {author} {\bibinfo {author} {\bibfnamefont {J.}~\bibnamefont
  {Goold}}, \bibinfo {author} {\bibfnamefont {M.}~\bibnamefont {Paternostro}},
  \ and\ \bibinfo {author} {\bibfnamefont {K.}~\bibnamefont {Modi}},\ }\href
  {\doibase 10.1103/PhysRevLett.114.060602} {\bibfield  {journal} {\bibinfo
  {journal} {Phys. Rev. Lett.}\ }\textbf {\bibinfo {volume} {114}},\ \bibinfo
  {pages} {060602} (\bibinfo {year} {2015})}\BibitemShut {NoStop}%
\bibitem [{\citenamefont {Lorenzo}\ \emph {et~al.}(2015)\citenamefont
  {Lorenzo}, \citenamefont {McCloskey}, \citenamefont {Ciccarello},
  \citenamefont {Paternostro},\ and\ \citenamefont {Palma}}]{Lorenzo2015}%
  \BibitemOpen
  \bibfield  {author} {\bibinfo {author} {\bibfnamefont {S.}~\bibnamefont
  {Lorenzo}}, \bibinfo {author} {\bibfnamefont {R.}~\bibnamefont {McCloskey}},
  \bibinfo {author} {\bibfnamefont {F.}~\bibnamefont {Ciccarello}}, \bibinfo
  {author} {\bibfnamefont {M.}~\bibnamefont {Paternostro}}, \ and\ \bibinfo
  {author} {\bibfnamefont {G.~M.}\ \bibnamefont {Palma}},\ }\href {\doibase
  10.1103/PhysRevLett.115.120403} {\bibfield  {journal} {\bibinfo  {journal}
  {Phys. Rev. Lett.}\ }\textbf {\bibinfo {volume} {115}},\ \bibinfo {pages}
  {120403} (\bibinfo {year} {2015})}\BibitemShut {NoStop}%
\bibitem [{\citenamefont {Faist}\ \emph {et~al.}(2015)\citenamefont {Faist},
  \citenamefont {Dupuis}, \citenamefont {Oppenheim},\ and\ \citenamefont
  {Renner}}]{Faist2015}%
  \BibitemOpen
  \bibfield  {author} {\bibinfo {author} {\bibfnamefont {P.}~\bibnamefont
  {Faist}}, \bibinfo {author} {\bibfnamefont {F.}~\bibnamefont {Dupuis}},
  \bibinfo {author} {\bibfnamefont {J.}~\bibnamefont {Oppenheim}}, \ and\
  \bibinfo {author} {\bibfnamefont {R.}~\bibnamefont {Renner}},\ }\href
  {\doibase 10.1038/ncomms8669} {\bibfield  {journal} {\bibinfo  {journal}
  {Nat. Commun.}\ }\textbf {\bibinfo {volume} {6}},\ \bibinfo {pages} {7669}
  (\bibinfo {year} {2015})}\BibitemShut {NoStop}%
\bibitem [{\citenamefont {Croucher}\ \emph {et~al.}(2017)\citenamefont
  {Croucher}, \citenamefont {Bedkihal},\ and\ \citenamefont
  {Vaccaro}}]{Croucher2017}%
  \BibitemOpen
  \bibfield  {author} {\bibinfo {author} {\bibfnamefont {T.}~\bibnamefont
  {Croucher}}, \bibinfo {author} {\bibfnamefont {S.}~\bibnamefont {Bedkihal}},
  \ and\ \bibinfo {author} {\bibfnamefont {J.~A.}\ \bibnamefont {Vaccaro}},\
  }\href {\doibase 10.1103/PhysRevLett.118.060602} {\bibfield  {journal}
  {\bibinfo  {journal} {Phys. Rev. Lett.}\ }\textbf {\bibinfo {volume} {118}},\
  \bibinfo {pages} {060602} (\bibinfo {year} {2017})}\BibitemShut {NoStop}%
\bibitem [{\citenamefont {Jarzynski}(1997{\natexlab{a}})}]{Jarzynski1996}%
  \BibitemOpen
  \bibfield  {author} {\bibinfo {author} {\bibfnamefont {C.}~\bibnamefont
  {Jarzynski}},\ }\href {\doibase 10.1103/PhysRevLett.78.2690} {\bibfield
  {journal} {\bibinfo  {journal} {Phys. Rev. Lett.}\ }\textbf {\bibinfo
  {volume} {78}},\ \bibinfo {pages} {2690} (\bibinfo {year}
  {1997}{\natexlab{a}})}\BibitemShut {NoStop}%
\bibitem [{\citenamefont {Jarzynski}(1997{\natexlab{b}})}]{Jarzynski1997}%
  \BibitemOpen
  \bibfield  {author} {\bibinfo {author} {\bibfnamefont {C.}~\bibnamefont
  {Jarzynski}},\ }\href {\doibase 10.1103/PhysRevE.56.5018} {\bibfield
  {journal} {\bibinfo  {journal} {Phys. Rev. E}\ }\textbf {\bibinfo {volume}
  {56}},\ \bibinfo {pages} {5018} (\bibinfo {year}
  {1997}{\natexlab{b}})}\BibitemShut {NoStop}%
\bibitem [{\citenamefont {Crooks}(1998)}]{Crooks1998}%
  \BibitemOpen
  \bibfield  {author} {\bibinfo {author} {\bibfnamefont {G.~E.}\ \bibnamefont
  {Crooks}},\ }\href {\doibase 10.1023/A:1023208217925} {\bibfield  {journal}
  {\bibinfo  {journal} {J. Stat. Phys.}\ }\textbf {\bibinfo {volume} {90}},\
  \bibinfo {pages} {1481} (\bibinfo {year} {1998})}\BibitemShut {NoStop}%
\bibitem [{\citenamefont {Crooks}(1999)}]{Crooks1999}%
  \BibitemOpen
  \bibfield  {author} {\bibinfo {author} {\bibfnamefont {G.~E.}\ \bibnamefont
  {Crooks}},\ }\href {\doibase 10.1103/PhysRevE.60.2721} {\bibfield  {journal}
  {\bibinfo  {journal} {Phys. Rev. E}\ }\textbf {\bibinfo {volume} {60}},\
  \bibinfo {pages} {2721} (\bibinfo {year} {1999})}\BibitemShut {NoStop}%
\bibitem [{\citenamefont {Goold}\ \emph {et~al.}(2016)\citenamefont {Goold},
  \citenamefont {Huber}, \citenamefont {Riera}, \citenamefont {del Rio},\ and\
  \citenamefont {Skrzypczyk}}]{Goold2015Review}%
  \BibitemOpen
  \bibfield  {author} {\bibinfo {author} {\bibfnamefont {J.}~\bibnamefont
  {Goold}}, \bibinfo {author} {\bibfnamefont {M.}~\bibnamefont {Huber}},
  \bibinfo {author} {\bibfnamefont {A.}~\bibnamefont {Riera}}, \bibinfo
  {author} {\bibfnamefont {L.}~\bibnamefont {del Rio}}, \ and\ \bibinfo
  {author} {\bibfnamefont {P.}~\bibnamefont {Skrzypczyk}},\ }\href {\doibase
  10.1088/1751-8113/49/14/143001} {\bibfield  {journal} {\bibinfo  {journal}
  {J. Phys. A}\ }\textbf {\bibinfo {volume} {49}},\ \bibinfo {pages} {143001}
  (\bibinfo {year} {2016})}\BibitemShut {NoStop}%
\bibitem [{\citenamefont {Vinjanampathy}\ and\ \citenamefont
  {Anders}(2016)}]{Vinjanampathy2015}%
  \BibitemOpen
  \bibfield  {author} {\bibinfo {author} {\bibfnamefont {S.}~\bibnamefont
  {Vinjanampathy}}\ and\ \bibinfo {author} {\bibfnamefont {J.}~\bibnamefont
  {Anders}},\ }\href {\doibase 10.1080/00107514.2016.1201896} {\bibfield
  {journal} {\bibinfo  {journal} {Contemp. Phys.}\ }\textbf {\bibinfo {volume}
  {57}},\ \bibinfo {pages} {545} (\bibinfo {year} {2016})}\BibitemShut
  {NoStop}%
\bibitem [{Note1()}]{Note1}%
  \BibitemOpen
  \bibinfo {note} {This Article only considers finite-dimensional
  systems.}\BibitemShut {Stop}%
\bibitem [{\citenamefont {Tasaki}(2000)}]{Tasaki2000}%
  \BibitemOpen
  \bibfield  {author} {\bibinfo {author} {\bibfnamefont {H.}~\bibnamefont
  {Tasaki}},\ }\href {http://www.arxiv.org/abs/cond-mat/0009244v2} {\bibfield
  {journal} {\bibinfo  {journal} {arXiv:cond-mat/0009244v2}\ } (\bibinfo {year}
  {2000})}\BibitemShut {NoStop}%
\bibitem [{\citenamefont {Talkner}\ and\ \citenamefont
  {H\"{a}nggi}(2007)}]{Talkner2007}%
  \BibitemOpen
  \bibfield  {author} {\bibinfo {author} {\bibfnamefont {P.}~\bibnamefont
  {Talkner}}\ and\ \bibinfo {author} {\bibfnamefont {P.}~\bibnamefont
  {H\"{a}nggi}},\ }\href {\doibase 10.1088/1751-8113/40/26/F08} {\bibfield
  {journal} {\bibinfo  {journal} {J. Phys. A: Math. Theor.}\ }\textbf {\bibinfo
  {volume} {40}},\ \bibinfo {pages} {569} (\bibinfo {year} {2007})}\BibitemShut
  {NoStop}%
\bibitem [{\citenamefont {Talkner}\ \emph {et~al.}(2009)\citenamefont
  {Talkner}, \citenamefont {Campisi},\ and\ \citenamefont
  {H{\"a}nggi}}]{Talkner2009}%
  \BibitemOpen
  \bibfield  {author} {\bibinfo {author} {\bibfnamefont {P.}~\bibnamefont
  {Talkner}}, \bibinfo {author} {\bibfnamefont {M.}~\bibnamefont {Campisi}}, \
  and\ \bibinfo {author} {\bibfnamefont {P.}~\bibnamefont {H{\"a}nggi}},\
  }\href {\doibase 10.1088/1742-5468/2009/02/P02025} {\bibfield  {journal}
  {\bibinfo  {journal} {J. Stat. Mech. Theory Exp.}\ }\textbf {\bibinfo
  {volume} {2009}},\ \bibinfo {pages} {P02025} (\bibinfo {year}
  {2009})}\BibitemShut {NoStop}%
\bibitem [{\citenamefont {Saira}\ \emph {et~al.}(2012)\citenamefont {Saira},
  \citenamefont {Yoon}, \citenamefont {Tanttu}, \citenamefont
  {M{\"o}tt{\"o}nen}, \citenamefont {Averin},\ and\ \citenamefont
  {Pekola}}]{Saira2012}%
  \BibitemOpen
  \bibfield  {author} {\bibinfo {author} {\bibfnamefont {O.-P.}\ \bibnamefont
  {Saira}}, \bibinfo {author} {\bibfnamefont {Y.}~\bibnamefont {Yoon}},
  \bibinfo {author} {\bibfnamefont {T.}~\bibnamefont {Tanttu}}, \bibinfo
  {author} {\bibfnamefont {M.}~\bibnamefont {M{\"o}tt{\"o}nen}}, \bibinfo
  {author} {\bibfnamefont {D.~V.}\ \bibnamefont {Averin}}, \ and\ \bibinfo
  {author} {\bibfnamefont {J.~P.}\ \bibnamefont {Pekola}},\ }\href {\doibase
  10.1103/PhysRevLett.109.180601} {\bibfield  {journal} {\bibinfo  {journal}
  {Phys. Rev. Lett.}\ }\textbf {\bibinfo {volume} {109}},\ \bibinfo {pages}
  {180601} (\bibinfo {year} {2012})}\BibitemShut {NoStop}%
\bibitem [{\citenamefont {Jarzynski}(2011)}]{Jarzynski2011}%
  \BibitemOpen
  \bibfield  {author} {\bibinfo {author} {\bibfnamefont {C.}~\bibnamefont
  {Jarzynski}},\ }\href {\doibase 10.1146/annurev-conmatphys-062910-140506}
  {\bibfield  {journal} {\bibinfo  {journal} {Annu. Rev. Condens. Matter
  Phys.}\ }\textbf {\bibinfo {volume} {2}},\ \bibinfo {pages} {329} (\bibinfo
  {year} {2011})}\BibitemShut {NoStop}%
\bibitem [{\citenamefont {Campisi}\ \emph {et~al.}(2011)\citenamefont
  {Campisi}, \citenamefont {H{\"a}nggi},\ and\ \citenamefont
  {Talkner}}]{Campisi2011}%
  \BibitemOpen
  \bibfield  {author} {\bibinfo {author} {\bibfnamefont {M.}~\bibnamefont
  {Campisi}}, \bibinfo {author} {\bibfnamefont {P.}~\bibnamefont {H{\"a}nggi}},
  \ and\ \bibinfo {author} {\bibfnamefont {P.}~\bibnamefont {Talkner}},\ }\href
  {\doibase 10.1103/RevModPhys.83.771} {\bibfield  {journal} {\bibinfo
  {journal} {Rev. Mod. Phys.}\ }\textbf {\bibinfo {volume} {83}},\ \bibinfo
  {pages} {771} (\bibinfo {year} {2011})}\BibitemShut {NoStop}%
\bibitem [{\citenamefont {Rastegin}(2013)}]{Rastegin2013}%
  \BibitemOpen
  \bibfield  {author} {\bibinfo {author} {\bibfnamefont {A.~E.}\ \bibnamefont
  {Rastegin}},\ }\href {\doibase 10.1088/1742-5468/2009/02/P02025} {\bibfield
  {journal} {\bibinfo  {journal} {J. Stat. Mech. Theor. Exp.}\ }\textbf
  {\bibinfo {volume} {2013}},\ \bibinfo {pages} {P06016} (\bibinfo {year}
  {2013})}\BibitemShut {NoStop}%
\bibitem [{\citenamefont {Albash}\ \emph {et~al.}(2013)\citenamefont {Albash},
  \citenamefont {Lidar}, \citenamefont {Marvian},\ and\ \citenamefont
  {Zanardi}}]{Albash2013}%
  \BibitemOpen
  \bibfield  {author} {\bibinfo {author} {\bibfnamefont {T.}~\bibnamefont
  {Albash}}, \bibinfo {author} {\bibfnamefont {D.~A.}\ \bibnamefont {Lidar}},
  \bibinfo {author} {\bibfnamefont {M.}~\bibnamefont {Marvian}}, \ and\
  \bibinfo {author} {\bibfnamefont {P.}~\bibnamefont {Zanardi}},\ }\href
  {\doibase 10.1103/PhysRevE.88.032146} {\bibfield  {journal} {\bibinfo
  {journal} {Phys. Rev. E}\ }\textbf {\bibinfo {volume} {88}},\ \bibinfo
  {pages} {032146} (\bibinfo {year} {2013})}\BibitemShut {NoStop}%
\bibitem [{\citenamefont {Rastegin}\ and\ \citenamefont {\ifmmode~\dot{Z}\else
  \.{Z}\fi{}yczkowski}(2014)}]{Rastegin2014}%
  \BibitemOpen
  \bibfield  {author} {\bibinfo {author} {\bibfnamefont {A.~E.}\ \bibnamefont
  {Rastegin}}\ and\ \bibinfo {author} {\bibfnamefont {K.}~\bibnamefont
  {\ifmmode~\dot{Z}\else \.{Z}\fi{}yczkowski}},\ }\href {\doibase
  10.1103/PhysRevE.89.012127} {\bibfield  {journal} {\bibinfo  {journal} {Phys.
  Rev. E}\ }\textbf {\bibinfo {volume} {89}},\ \bibinfo {pages} {012127}
  (\bibinfo {year} {2014})}\BibitemShut {NoStop}%
\bibitem [{\citenamefont {Goold}\ and\ \citenamefont {Modi}(2014)}]{Goold2014}%
  \BibitemOpen
  \bibfield  {author} {\bibinfo {author} {\bibfnamefont {J.}~\bibnamefont
  {Goold}}\ and\ \bibinfo {author} {\bibfnamefont {K.}~\bibnamefont {Modi}},\
  }\href {http://arxiv.org/abs/1407.4618} {\bibfield  {journal} {\bibinfo
  {journal} {arXiv:quant-ph/1407.4618}\ } (\bibinfo {year} {2014})}\BibitemShut
  {NoStop}%
\bibitem [{\citenamefont {Deffner}\ \emph {et~al.}(2016)\citenamefont
  {Deffner}, \citenamefont {Paz},\ and\ \citenamefont {Zurek}}]{Deffner2016}%
  \BibitemOpen
  \bibfield  {author} {\bibinfo {author} {\bibfnamefont {S.}~\bibnamefont
  {Deffner}}, \bibinfo {author} {\bibfnamefont {J.~P.}\ \bibnamefont {Paz}}, \
  and\ \bibinfo {author} {\bibfnamefont {W.~H.}\ \bibnamefont {Zurek}},\ }\href
  {\doibase 10.1103/PhysRevE.94.010103} {\bibfield  {journal} {\bibinfo
  {journal} {Phys. Rev. E}\ }\textbf {\bibinfo {volume} {94}},\ \bibinfo
  {pages} {010103} (\bibinfo {year} {2016})}\BibitemShut {NoStop}%
\bibitem [{\citenamefont {Dorner}\ \emph {et~al.}(2013)\citenamefont {Dorner},
  \citenamefont {Clark}, \citenamefont {Heaney}, \citenamefont {Fazio},
  \citenamefont {Goold},\ and\ \citenamefont {Vedral}}]{Dorner2013}%
  \BibitemOpen
  \bibfield  {author} {\bibinfo {author} {\bibfnamefont {R.}~\bibnamefont
  {Dorner}}, \bibinfo {author} {\bibfnamefont {S.~R.}\ \bibnamefont {Clark}},
  \bibinfo {author} {\bibfnamefont {L.}~\bibnamefont {Heaney}}, \bibinfo
  {author} {\bibfnamefont {R.}~\bibnamefont {Fazio}}, \bibinfo {author}
  {\bibfnamefont {J.}~\bibnamefont {Goold}}, \ and\ \bibinfo {author}
  {\bibfnamefont {V.}~\bibnamefont {Vedral}},\ }\href {\doibase
  10.1103/PhysRevLett.110.230601} {\bibfield  {journal} {\bibinfo  {journal}
  {Phys. Rev. Lett.}\ }\textbf {\bibinfo {volume} {110}},\ \bibinfo {pages}
  {230601} (\bibinfo {year} {2013})}\BibitemShut {NoStop}%
\bibitem [{\citenamefont {Mazzola}\ \emph {et~al.}(2013)\citenamefont
  {Mazzola}, \citenamefont {De~Chiara},\ and\ \citenamefont
  {Paternostro}}]{Mazzola2013}%
  \BibitemOpen
  \bibfield  {author} {\bibinfo {author} {\bibfnamefont {L.}~\bibnamefont
  {Mazzola}}, \bibinfo {author} {\bibfnamefont {G.}~\bibnamefont {De~Chiara}},
  \ and\ \bibinfo {author} {\bibfnamefont {M.}~\bibnamefont {Paternostro}},\
  }\href {\doibase 10.1103/PhysRevLett.110.230602} {\bibfield  {journal}
  {\bibinfo  {journal} {Phys. Rev. Lett.}\ }\textbf {\bibinfo {volume} {110}},\
  \bibinfo {pages} {230602} (\bibinfo {year} {2013})}\BibitemShut {NoStop}%
\bibitem [{\citenamefont {Batalh{\~a}o}\ \emph {et~al.}(2014)\citenamefont
  {Batalh{\~a}o}, \citenamefont {Souza}, \citenamefont {Mazzola}, \citenamefont
  {Auccaise}, \citenamefont {Sarthour}, \citenamefont {Oliveira}, \citenamefont
  {Goold}, \citenamefont {De~Chiara}, \citenamefont {Paternostro},\ and\
  \citenamefont {Serra}}]{Batalhao2014}%
  \BibitemOpen
  \bibfield  {author} {\bibinfo {author} {\bibfnamefont {T.~B.}\ \bibnamefont
  {Batalh{\~a}o}}, \bibinfo {author} {\bibfnamefont {A.~M.}\ \bibnamefont
  {Souza}}, \bibinfo {author} {\bibfnamefont {L.}~\bibnamefont {Mazzola}},
  \bibinfo {author} {\bibfnamefont {R.}~\bibnamefont {Auccaise}}, \bibinfo
  {author} {\bibfnamefont {R.~S.}\ \bibnamefont {Sarthour}}, \bibinfo {author}
  {\bibfnamefont {I.~S.}\ \bibnamefont {Oliveira}}, \bibinfo {author}
  {\bibfnamefont {J.}~\bibnamefont {Goold}}, \bibinfo {author} {\bibfnamefont
  {G.}~\bibnamefont {De~Chiara}}, \bibinfo {author} {\bibfnamefont
  {M.}~\bibnamefont {Paternostro}}, \ and\ \bibinfo {author} {\bibfnamefont
  {R.~M.}\ \bibnamefont {Serra}},\ }\href {\doibase
  10.1103/PhysRevLett.113.140601} {\bibfield  {journal} {\bibinfo  {journal}
  {Phys. Rev. Lett.}\ }\textbf {\bibinfo {volume} {113}},\ \bibinfo {pages}
  {140601} (\bibinfo {year} {2014})}\BibitemShut {NoStop}%
\bibitem [{\citenamefont {Goold}\ \emph {et~al.}(2014)\citenamefont {Goold},
  \citenamefont {Poschinger},\ and\ \citenamefont {Modi}}]{Goold2014Heat}%
  \BibitemOpen
  \bibfield  {author} {\bibinfo {author} {\bibfnamefont {J.}~\bibnamefont
  {Goold}}, \bibinfo {author} {\bibfnamefont {U.}~\bibnamefont {Poschinger}}, \
  and\ \bibinfo {author} {\bibfnamefont {K.}~\bibnamefont {Modi}},\ }\href
  {\doibase 10.1103/PhysRevE.90.020101} {\bibfield  {journal} {\bibinfo
  {journal} {Phys. Rev. E}\ }\textbf {\bibinfo {volume} {90}},\ \bibinfo
  {pages} {020101} (\bibinfo {year} {2014})}\BibitemShut {NoStop}%
\bibitem [{Note2()}]{Note2}%
  \BibitemOpen
  \bibinfo {note} {Jensen's inequality holds for any convex function $f$ of a
  random variable $X$: $ \braket { f(X) } \ge f(\braket {X})$.}\BibitemShut
  {Stop}%
\bibitem [{\citenamefont {Popescu}\ \emph {et~al.}(2006)\citenamefont
  {Popescu}, \citenamefont {Short},\ and\ \citenamefont
  {Winter}}]{Popescu2006}%
  \BibitemOpen
  \bibfield  {author} {\bibinfo {author} {\bibfnamefont {S.}~\bibnamefont
  {Popescu}}, \bibinfo {author} {\bibfnamefont {A.~J.}\ \bibnamefont {Short}},
  \ and\ \bibinfo {author} {\bibfnamefont {A.}~\bibnamefont {Winter}},\ }\href
  {\doibase 10.1038/nphys444} {\bibfield  {journal} {\bibinfo  {journal} {Nat.
  Phys.}\ }\textbf {\bibinfo {volume} {2}},\ \bibinfo {pages} {754} (\bibinfo
  {year} {2006})}\BibitemShut {NoStop}%
\bibitem [{\citenamefont {Ledoux}(2000)}]{LedouxBook}%
  \BibitemOpen
  \bibfield  {author} {\bibinfo {author} {\bibfnamefont {M.}~\bibnamefont
  {Ledoux}},\ }\href@noop {} {\emph {\bibinfo {title} {{The Concentration of
  Measure Phenomenon, Mathematical Surveys and Monographs 89}}}}\ (\bibinfo
  {publisher} {American Mathematical Society},\ \bibinfo {year}
  {2000})\BibitemShut {NoStop}%
\bibitem [{Note3()}]{Note3}%
  \BibitemOpen
  \bibinfo {note} {Upon completion of the manuscript, we became aware that a
  similar result is used in relation to the decoupling protocol in Refs.~\cite
  {Hutter2012,Dupuis2014} based on Corollary 4.4.28 of~\cite
  {AndersonBook}.}\BibitemShut {Stop}%
\bibitem [{\citenamefont {Bengtsson}\ \emph {et~al.}(2007)\citenamefont
  {Bengtsson}, \citenamefont {Bruzda}, \citenamefont {Ericsson}, \citenamefont
  {Larsson}, \citenamefont {Tadej},\ and\ \citenamefont
  {{\.{Z}}yczkowski}}]{BengtssonBruzda2007}%
  \BibitemOpen
  \bibfield  {author} {\bibinfo {author} {\bibfnamefont {I.}~\bibnamefont
  {Bengtsson}}, \bibinfo {author} {\bibfnamefont {W.}~\bibnamefont {Bruzda}},
  \bibinfo {author} {\bibfnamefont {{\r{A}}.}~\bibnamefont {Ericsson}},
  \bibinfo {author} {\bibfnamefont {J.}~\bibnamefont {Larsson}}, \bibinfo
  {author} {\bibfnamefont {W.}~\bibnamefont {Tadej}}, \ and\ \bibinfo {author}
  {\bibfnamefont {K.}~\bibnamefont {{\.{Z}}yczkowski}},\ }\href {\doibase
  http://dx.doi.org/10.1063/1.2716990} {\bibfield  {journal} {\bibinfo
  {journal} {J. Math. Phys.}\ }\textbf {\bibinfo {volume} {48}},\ \bibinfo
  {pages} {052106} (\bibinfo {year} {2007})}\BibitemShut {NoStop}%
\bibitem [{\citenamefont {Szehr}\ \emph {et~al.}(2013)\citenamefont {Szehr},
  \citenamefont {Dupuis}, \citenamefont {Tomamichel},\ and\ \citenamefont
  {Renner}}]{Szehr2013}%
  \BibitemOpen
  \bibfield  {author} {\bibinfo {author} {\bibfnamefont {O.}~\bibnamefont
  {Szehr}}, \bibinfo {author} {\bibfnamefont {F.}~\bibnamefont {Dupuis}},
  \bibinfo {author} {\bibfnamefont {M.}~\bibnamefont {Tomamichel}}, \ and\
  \bibinfo {author} {\bibfnamefont {R.}~\bibnamefont {Renner}},\ }\href
  {\doibase 10.1088/1367-2630/15/5/053022} {\bibfield  {journal} {\bibinfo
  {journal} {New J. Phys.}\ }\textbf {\bibinfo {volume} {15}},\ \bibinfo
  {pages} {053022} (\bibinfo {year} {2013})}\BibitemShut {NoStop}%
\bibitem [{\citenamefont {Dupuis}\ \emph {et~al.}(2014)\citenamefont {Dupuis},
  \citenamefont {Berta}, \citenamefont {Wullschleger},\ and\ \citenamefont
  {Renner}}]{Dupuis2014}%
  \BibitemOpen
  \bibfield  {author} {\bibinfo {author} {\bibfnamefont {F.}~\bibnamefont
  {Dupuis}}, \bibinfo {author} {\bibfnamefont {M.}~\bibnamefont {Berta}},
  \bibinfo {author} {\bibfnamefont {J.}~\bibnamefont {Wullschleger}}, \ and\
  \bibinfo {author} {\bibfnamefont {R.}~\bibnamefont {Renner}},\ }\href
  {\doibase 10.1007/s00220-014-1990-4} {\bibfield  {journal} {\bibinfo
  {journal} {Commun. Math. Phys.}\ }\textbf {\bibinfo {volume} {328}},\
  \bibinfo {pages} {251} (\bibinfo {year} {2014})}\BibitemShut {NoStop}%
\bibitem [{\citenamefont {Emerson}\ \emph {et~al.}(2005)\citenamefont
  {Emerson}, \citenamefont {Alicki},\ and\ \citenamefont {\ifmmode~\dot{Z}\else
  \.{Z}\fi{}yczkowski}}]{Emerson2005}%
  \BibitemOpen
  \bibfield  {author} {\bibinfo {author} {\bibfnamefont {J.}~\bibnamefont
  {Emerson}}, \bibinfo {author} {\bibfnamefont {R.}~\bibnamefont {Alicki}}, \
  and\ \bibinfo {author} {\bibfnamefont {K.}~\bibnamefont
  {\ifmmode~\dot{Z}\else \.{Z}\fi{}yczkowski}},\ }\href {\doibase
  10.1088/1464-4266/7/10/021} {\bibfield  {journal} {\bibinfo  {journal} {J.
  Opt. B: Quantum Semiclass. Opt.}\ }\textbf {\bibinfo {volume} {7}},\ \bibinfo
  {pages} {S347} (\bibinfo {year} {2005})}\BibitemShut {NoStop}%
\bibitem [{\citenamefont {Knill}\ \emph {et~al.}(2008)\citenamefont {Knill},
  \citenamefont {Leibfried}, \citenamefont {Reichle}, \citenamefont {Britton},
  \citenamefont {Blakestad}, \citenamefont {Jost}, \citenamefont {Langer},
  \citenamefont {Ozeri}, \citenamefont {Seidelin},\ and\ \citenamefont
  {Wineland}}]{Knill2008}%
  \BibitemOpen
  \bibfield  {author} {\bibinfo {author} {\bibfnamefont {E.}~\bibnamefont
  {Knill}}, \bibinfo {author} {\bibfnamefont {D.}~\bibnamefont {Leibfried}},
  \bibinfo {author} {\bibfnamefont {R.}~\bibnamefont {Reichle}}, \bibinfo
  {author} {\bibfnamefont {J.}~\bibnamefont {Britton}}, \bibinfo {author}
  {\bibfnamefont {R.~B.}\ \bibnamefont {Blakestad}}, \bibinfo {author}
  {\bibfnamefont {J.~D.}\ \bibnamefont {Jost}}, \bibinfo {author}
  {\bibfnamefont {C.}~\bibnamefont {Langer}}, \bibinfo {author} {\bibfnamefont
  {R.}~\bibnamefont {Ozeri}}, \bibinfo {author} {\bibfnamefont
  {S.}~\bibnamefont {Seidelin}}, \ and\ \bibinfo {author} {\bibfnamefont
  {D.~J.}\ \bibnamefont {Wineland}},\ }\href {\doibase
  10.1103/PhysRevA.77.012307} {\bibfield  {journal} {\bibinfo  {journal} {Phys.
  Rev. A}\ }\textbf {\bibinfo {volume} {77}},\ \bibinfo {pages} {012307}
  (\bibinfo {year} {2008})}\BibitemShut {NoStop}%
\bibitem [{\citenamefont {Goldstein}\ \emph {et~al.}(2006)\citenamefont
  {Goldstein}, \citenamefont {Lebowitz}, \citenamefont {Tumulka},\ and\
  \citenamefont {Zangh\`{\i}}}]{Goldstein2006}%
  \BibitemOpen
  \bibfield  {author} {\bibinfo {author} {\bibfnamefont {S.}~\bibnamefont
  {Goldstein}}, \bibinfo {author} {\bibfnamefont {J.~L.}\ \bibnamefont
  {Lebowitz}}, \bibinfo {author} {\bibfnamefont {R.}~\bibnamefont {Tumulka}}, \
  and\ \bibinfo {author} {\bibfnamefont {N.}~\bibnamefont {Zangh\`{\i}}},\
  }\href {\doibase 10.1103/PhysRevLett.96.050403} {\bibfield  {journal}
  {\bibinfo  {journal} {Phys. Rev. Lett.}\ }\textbf {\bibinfo {volume} {96}},\
  \bibinfo {pages} {050403} (\bibinfo {year} {2006})}\BibitemShut {NoStop}%
\bibitem [{\citenamefont {Hutter}\ and\ \citenamefont
  {Wehner}(2012)}]{Hutter2012}%
  \BibitemOpen
  \bibfield  {author} {\bibinfo {author} {\bibfnamefont {A.}~\bibnamefont
  {Hutter}}\ and\ \bibinfo {author} {\bibfnamefont {S.}~\bibnamefont
  {Wehner}},\ }\href {\doibase 10.1103/PhysRevLett.108.070501} {\bibfield
  {journal} {\bibinfo  {journal} {Phys. Rev. Lett.}\ }\textbf {\bibinfo
  {volume} {108}},\ \bibinfo {pages} {070501} (\bibinfo {year}
  {2012})}\BibitemShut {NoStop}%
\bibitem [{\citenamefont {Dankert}\ \emph {et~al.}(2009)\citenamefont
  {Dankert}, \citenamefont {Cleve}, \citenamefont {Emerson},\ and\
  \citenamefont {Livine}}]{Dankert2009}%
  \BibitemOpen
  \bibfield  {author} {\bibinfo {author} {\bibfnamefont {C.}~\bibnamefont
  {Dankert}}, \bibinfo {author} {\bibfnamefont {R.}~\bibnamefont {Cleve}},
  \bibinfo {author} {\bibfnamefont {J.}~\bibnamefont {Emerson}}, \ and\
  \bibinfo {author} {\bibfnamefont {E.}~\bibnamefont {Livine}},\ }\href
  {\doibase 10.1103/PhysRevA.80.012304} {\bibfield  {journal} {\bibinfo
  {journal} {Phys. Rev. A}\ }\textbf {\bibinfo {volume} {80}},\ \bibinfo
  {pages} {012304} (\bibinfo {year} {2009})}\BibitemShut {NoStop}%
\bibitem [{\citenamefont {Nakata}\ \emph {et~al.}(2017)\citenamefont {Nakata},
  \citenamefont {Hirche}, \citenamefont {Morgan},\ and\ \citenamefont
  {Winter}}]{Nakata2017}%
  \BibitemOpen
  \bibfield  {author} {\bibinfo {author} {\bibfnamefont {Y.}~\bibnamefont
  {Nakata}}, \bibinfo {author} {\bibfnamefont {C.}~\bibnamefont {Hirche}},
  \bibinfo {author} {\bibfnamefont {C.}~\bibnamefont {Morgan}}, \ and\ \bibinfo
  {author} {\bibfnamefont {A.}~\bibnamefont {Winter}},\ }\href {\doibase
  10.1063/1.4983266} {\bibfield  {journal} {\bibinfo  {journal} {J. Math.
  Phys.}\ }\textbf {\bibinfo {volume} {58}},\ \bibinfo {pages} {052203}
  (\bibinfo {year} {2017})}\BibitemShut {NoStop}%
\bibitem [{\citenamefont {Epstein}\ and\ \citenamefont
  {Whaley}(2017)}]{EpsteinWhaley2016}%
  \BibitemOpen
  \bibfield  {author} {\bibinfo {author} {\bibfnamefont {J.~M.}\ \bibnamefont
  {Epstein}}\ and\ \bibinfo {author} {\bibfnamefont {K.~B.}\ \bibnamefont
  {Whaley}},\ }\href {\doibase 10.1103/PhysRevA.95.042314} {\bibfield
  {journal} {\bibinfo  {journal} {Phys. Rev. A}\ }\textbf {\bibinfo {volume}
  {95}},\ \bibinfo {pages} {042314} (\bibinfo {year} {2017})}\BibitemShut
  {NoStop}%
\bibitem [{\citenamefont {Li}(2013)}]{MMuellerLec6}%
  \BibitemOpen
  \bibfield  {author} {\bibinfo {author} {\bibfnamefont {Y.}~\bibnamefont
  {Li}},\ }\href {http://www.mpmueller.net/seminar/talk6.pdf} {\bibfield
  {journal} {\bibinfo  {journal} {{Averages over the unitary group (seminar 6
  of \textit{Selected Topics in Mathematical Physics: Quantum Information
  Theory})}}\ } (\bibinfo {year} {2013})}\BibitemShut {NoStop}%
\bibitem [{\citenamefont {Gl\"{a}{\ss}ler}(2013)}]{MMuellerLec8}%
  \BibitemOpen
  \bibfield  {author} {\bibinfo {author} {\bibfnamefont {T.}~\bibnamefont
  {Gl\"{a}{\ss}ler}},\ }\href {http://www.mpmueller.net/seminar/talk8.pdf}
  {\bibfield  {journal} {\bibinfo  {journal} {{Almost all pure states are
  almost maximally entangled (seminar 8 of \textit{Selected Topics in
  Mathematical Physics: Quantum Information Theory})}}\ } (\bibinfo {year}
  {2013})}\BibitemShut {NoStop}%
\bibitem [{\citenamefont {Gromov}(1979)}]{Gromov1979}%
  \BibitemOpen
  \bibfield  {author} {\bibinfo {author} {\bibfnamefont {M.}~\bibnamefont
  {Gromov}},\ }\href {http://www.ihes.fr/~gromov/PDF/11[33].pdf} {\bibfield
  {journal} {\bibinfo  {journal} {Preprint I.H.\'{E}.S.}\ } (\bibinfo {year}
  {1979})}\BibitemShut {NoStop}%
\bibitem [{\citenamefont {Ros}(2005)}]{minsurfaces}%
  \BibitemOpen
  \bibfield  {author} {\bibinfo {author} {\bibfnamefont {A.}~\bibnamefont
  {Ros}},\ }in\ \href@noop {} {\emph {\bibinfo {booktitle} {{Global Theory of
  Minimal Surfaces: Proceedings of the Clay Mathematics Institute 2001 Summer
  School}}}}\ (\bibinfo  {publisher} {American Mathematical Society},\ \bibinfo
  {year} {2005})\ pp.\ \bibinfo {pages} {197--198}\BibitemShut {NoStop}%
\bibitem [{\citenamefont {Milnor}(1976)}]{Milnor1976}%
  \BibitemOpen
  \bibfield  {author} {\bibinfo {author} {\bibfnamefont {J.}~\bibnamefont
  {Milnor}},\ }\href {\doibase 10.1016/S0001-8708(76)80002-3} {\bibfield
  {journal} {\bibinfo  {journal} {Adv. Math.}\ }\textbf {\bibinfo {volume}
  {21}},\ \bibinfo {pages} {293} (\bibinfo {year} {1976})}\BibitemShut
  {NoStop}%
\bibitem [{\citenamefont {Gallier}\ and\ \citenamefont
  {Quaintance}(2016)}]{GallierBook}%
  \BibitemOpen
  \bibfield  {author} {\bibinfo {author} {\bibfnamefont {J.}~\bibnamefont
  {Gallier}}\ and\ \bibinfo {author} {\bibfnamefont {J.}~\bibnamefont
  {Quaintance}},\ }\href {http://www.cis.upenn.edu/~jean/gbooks/manif.html}
  {\bibfield  {journal} {\bibinfo  {journal} {Preprint}\ } (\bibinfo {year}
  {2016})}\BibitemShut {NoStop}%
\end{thebibliography}%

\end{document}